\theoremstyle{plain}
\newtheorem{theorem}[subsubsection]{Theorem}
\newtheorem{lemma}[subsubsection]{Lemma}
\newtheorem{assumption}[subsubsection]{Assumption}
\newtheorem{proposition}[subsubsection]{Proposition}
\newtheorem{remark}[subsubsection]{Remark}
\theoremstyle{definition}
\newtheorem{definition}[subsubsection]{Definition}
\newtheorem{example}[subsubsection]{Example}
\DeclareMathOperator{\interior}{int}
\DeclareMathOperator{\dom}{dom}
\DeclareMathOperator{\VaR}{VaR}
\DeclareMathOperator{\LVaR}{LVaR}
\DeclareMathOperator{\RVaR}{RVaR}
\DeclareMathOperator{\ES}{ES}
\DeclareMathOperator{\symbolSCRM}{SC}
\DeclareMathOperator{\symbolCRM}{C}
\DeclareMathOperator{\symbolFCRM}{FC}
\DeclareMathOperator{\symbolAERM}{AE}
\newcommand{\Rho}{\mathcal{P}}
\newcommand{\adjustedRiskMeasure}{\rho_{\Rho,g}}
\newcommand{\adjustedRiskMeasureInfimal}{\rho_{\Rho,ng}}
\newcommand{\scrm}{\rho^{\symbolSCRM}_{r,g}}
\newcommand{\crm}{\rho^{\symbolCRM}_{\mathcal{L},g}}
\newcommand{\fcrm}{\rho^{\symbolFCRM}_{\mathcal{L},g}}
\newcommand{\aerm}{\rho_{g}^{\symbolAERM}}
\DeclareMathOperator*{\esssup}{ess\,sup}
\DeclareMathOperator*{\essinf}{ess\,inf}
\title{Risk measures based on target risk profiles}
\author{Jascha Alexander\,$^{1,\dagger}$}
\author{Christian Laudagé\,$^{1,\ast}$}
\author{Jörn Sass\,$^{1,\ddagger}$}
\thanks{$^1$ Department of Mathematics, RPTU Kaiserslautern-Landau, Gottlieb-Daimler-Straße 47, 67663 Kaiserslautern, Germany}
\thanks{$^\dagger$ \textit{E-mail address:} \texttt{jalexand@rptu.de}}
\thanks{$^{\ast}$  \textit{E-mail address:} \texttt{christian.laudage@rptu.de} (corresponding author)}
\thanks{$^\ddagger$ \textit{E-mail address:} \texttt{joern.sass@rptu.de}}
\date{\today}
\begin{document}
    \pagenumbering{arabic}
	\numberwithin{equation}{section}\numberwithin{equation}{section}
	
	\begin{abstract}
    We address the problem that classical risk measures may not detect the tail risk adequately. This can occur for instance due to averaging when calculating the Expected Shortfall. The current literature proposes the so-called adjusted Expected Shortfall as a solution. This risk measure is the supremum of Expected Shortfalls for all possible levels, adjusted with a function $g$, the so-called target risk profile. We generalize this idea by using a family of risk measures which allows for more choices than Expected Shortfalls, leading to the concept of adjusted risk measures. An adjusted risk measure quantifies the minimal amount of capital that has to added to a financial position $X$ to ensure that each risk measure out of the chosen family is smaller or equal to the target risk profile $g(p)$ for the corresponding level $p\in[0,1]$. We discuss a variety of assumptions such that desirable properties for risk measures are satisfied in this setup. From a theoretical point of view, our main contribution is the analysis of equivalent assumptions such that an adjusted risk measure is positive homogeneous and subadditive. Furthermore, we show that these conditions hold for several adjusted risk measures beyond the adjusted Expected Shortfall. In addition, we derive their dual representations. Finally, we test the performance in a case study based on the S$\&$P~$500$. 

    \medskip
	
    \item[\hskip\labelsep\scshape Keywords:] Adjusted Expected Shortfall, expectiles, Loss Value-at-Risk, Range Value-at-Risk, target risk profile  
    	
    \medskip
	
    \item[\hskip\labelsep\scshape JEL Classification:] G11, G32

    \end{abstract}
    \maketitle
    
    \section{Introduction}
	Classical risk measures, like Value-at-Risk (VaR) and Expected Shortfall (ES), are commonly used in finance and insurance. The VaR is a quantile and it is interpreted as the minimal amount of cash needed such that bankruptcy only occurs with a pre-specified probability level. Unlike VaR, ES captures the tail behavior by averaging over it. However, due to the simplicity of the averaging process, ES is not able to capture every tail behavior appropriately. In fact, we could have two different distributions with an equal risk value by ES, but they differ significantly by the probability mass lying in the tail of the distribution. This then leads to an underestimation of risk by the ES. To fix this problem, the literature recently introduced the Loss VaR and the adjusted ES. Both are based on the idea to calculate VaR, respectively ES, at every probability level and adjust those values by a given function, the so-called target risk profile. The supremum of these values yields the Loss VaR and the adjusted ES, respectively. In this work, we generalize this idea.
	
	\textit{Literature review:} VaR was introduced under the name RiskMetrics in $1994$ by JP Morgen. For the history of the development of the VaR we refer to \cite{guldimann2000story}. Thereafter, the focus shifted to ES (also called Conditional VaR or Average VaR), for early conclusions we refer to \cite{ACERBI20021487}, \cite{FREY20021317} and \cite{ROCKAFELLAR20021443}. For implications and deficits of VaR and ES, concerning the tail behavior, we refer to \cite{https://doi.org/10.1111/1467-9965.00068,Wang,RVaR2,WEBER2018191}.
 
    Driven by the Basel accords and Solvency II, a debate about the benefits and the drawbacks of VaR and ES erupted, see e.g.,~\cite{VaRgegenES1} and \cite{VaRgegenES2}. As a consequence, the ES is often considered as more appropriate than the VaR and the Basel Committee agreed on using the ES, see \cite{bcbs2012consultative}. But, there are also disadvantages of the ES, see e.g.,~\cite{KMM}. \cite{Robust} point out that the VaR (in contrast to the ES) satisfies the classical Hampel robustness, see \cite{hampel2011robust}. Contributing to this debate, \cite{Kraetschmer2014} argue that Hampel robustness is not an adequate choice to define the robustness of a risk measure. Hence, they use a different metric for which the ES turned out to be robust. This hypothesis is also confirmed by~\cite{doi:10.1287/opre.2021.2147}, who investigate robustness after a preceding portfolio optimization. 
    
    Our starting point is the idea suggested in~\cite{LVaR}, which aims to capture the tail behavior of a distribution more adequately. To do so, for every probability level, the corresponding VaR value is reduced by the value of the target risk profile. The supremum with respect to these reduced VaR values is the so-called Loss VaR (LVaR).
    
    The methodology in \cite{AES} follows this idea, but instead of using a family based on VaR they use a family based on ES. This results in the so-called adjusted ES. \cite[Theorem 3.1]{Mao} states that any risk measure which is consistent with second-order stochastic dominance can be represented as an infimum of a family of adjusted ESs.

    Recently, following the idea of an adjusted ES, also other examples are proposed in which the family of ES is substituted by another concrete family of risk measures. For instance, \cite{ZOU2023255} make use of a family of entropic Value-at-Risks of order $p\in(-\infty,0)\cup(0,\infty]$. They call it an adjusted Rényi entropic VaR. Further, \cite{ZOU20241} generalize an adjusted ES by using a family of higher-order Expected Shortfalls. The resulting risk measure is called an adjusted higher-order ES.
	
	\textit{Methodology:} We generalize the aforementioned concepts by allowing in their definitions for an arbitrary family of risk measures, instead of restricting attention to a family of VaRs (\cite{LVaR}) or a family of ESs (\cite{AES}). To be precise, the adjusted ES relies on the ES values for all levels between zero and one. Instead, we allow to substitute the ES by other risk measures. The new risk measure is called an adjusted risk measure. It is the supremum of risk measures minus the value of a function $g$, the so-called target risk profile. To the best of our knowledge, the only reference that mentioned such a generalization of the adjusted ES is~\cite{herdegen2024rhoarbitragerhoconsistentpricingstarshaped}. However, the authors do not elaborate on properties of adjusted risk measures. We close this gap in the literature by providing a comprehensive analysis of adjusted risk measures.
 
    \textit{Main contributions:} Under a weak assumption on $g$, we state an equivalent condition for positive homogeneity and subadditivity of the adjusted risk measure. The assumption on $g$ states that the adjusted risk measure is finite and the supremum in the construction of the adjusted risk measure can be attained for a value for which the target risk profile $g$ is unequal zero. Afterwards, we analyze explicit constructions of adjusted risk measures. To do so, in contrast to the adjusted ES, which solely relies on Expected Shortfalls for levels between zero and one, we use different kinds of risk measures for different levels, like VaR, Range-VaR, ES or expectiles. We show that all of the resulting adjusted risk measures are neither positive homogeneous nor subbadditive in general. Furthermore, we develop dual representations of the new constructions.
    
    Finally, we evaluate the new risk measures in a case study based on the S$\&$P 500 and single stocks from this index. First, we test a step function $g$ as target risk profile, compare with~\cite{AES}. We find that in this case, the new risk measures react sensitively to even small changes in the underlying time series. Second, instead of step functions, we use target risk profiles that we obtain as a family of risk measures evaluated for the S$\&$P 500. Here, we test different underlying time periods to calibrate the target risk profile. For a time period in which the S$\&$P 500 is less volatile, we observe similar effects as in the case of a step function. In contrast, using a time period in which the S$\&$P 500 is more volatile, then the new risk measures are useful to visualize times of crisis in the underlying data. Lastly, we analyze the new risk measures by reevaluating the target risk profiles over time. In doing so, the new risk measures can underestimate times of crisis. From this we conclude that such a procedure should only be used by less risk averse investors. Additionally, we use the new risk measures to compare different financial markets. It turns out that such a procedure is useful to detect differences between these markets.
	
	\textit{Structure of the manuscript:} In Section $2$, the adjusted risk measure is defined and main properties are discussed. In particular, we introduce our main theorem regarding the positive homogeneity and subadditivity of the adjusted risk measure. Then, in Section $3$, we analyze explicit examples. In Section $4$ we test the behavior of several new risk measures in a case study based on the S$\&$P $500$ and individual stock shares. 

    \textit{Basic notations and definitions:} Let $(\Omega,\mathcal{F},\mathbb{P})$ be an atomless\footnote{This property is used to guarantee the statements in Table~\ref{table1} and is needed to construct counterexamples in  Appendix~\ref{sec:explanationsTable}.} probability space. In the following, we denote by $L^0(\Omega,\mathcal{F},\mathbb{P})$, or $L^0$ for short, the linear space of all equivalence classes with respect to $\mathbb{P}$-almost sure equivalence of real-valued random variables over $(\Omega,\mathcal{F},\mathbb{P})$. For $p\in[1,\infty)$, we denote by  $L^p(\Omega,\mathcal{F},\mathbb{P})$, or $L^p$ for short, the linear space of all equivalence classes of real-valued random variables with finite $p$-th moment. Further, we denote by $L^\infty(\Omega,\mathcal{F},\mathbb{P})$, or $L^\infty$ for short, the linear space of essentially bounded random variables. For $X,Y\in L^0$, we write $X\sim Y$, if $X$ and $Y$ are identically distributed. For $X,Y\in L^1$, we write $X\geq_{\text{SSD}}Y$, if $E[u(-X)]\geq E[u(-Y)]$ for every increasing and concave function $u:\mathbb{R}\rightarrow\mathbb{R}$. Let $\mathcal{Y}\subseteq L^0$. The epigraph of a map $f:\mathcal{Y}\rightarrow[-\infty,\infty]$ is $\text{epi}(f) \coloneqq \{(X,\alpha)\in \mathcal{Y}\times \mathbb{R}\,|\, f(X)\leq \alpha\}$.
    
	\section{Adjusted risk measure}
 
    In this manuscript, we work with the loss of financial positions at the end of a fixed time period. The linear space of all possible losses is denoted by $\mathcal{X} = L^{q}$ for $q\in\{0,1\}$. In particular, we use $\mathcal{X}$ as the domain for the risk measures. It should be clear that positive values of $X\in \mathcal{X}$ refer to losses. 
 
	As a last prerequisite, we introduce a set of functions, which we use as possible target risk profiles in the definition of the adjusted risk measure later on. For this, let $\mathcal{G}$ be the set of all increasing (in the non-strict sense) functions $g:[0,1] \rightarrow (-\infty,\infty]$ with $g(p)<\infty$ for at least one $p\in(0,1]$.

 \subsection{Definition of adjusted risk measures}\label{sec:def_adjustedRiskMeasure}
	Before we introduce our new concept, we recall the definition of the adjusted ES, as introduced in~\cite{AES}. For this, we work throughout the whole manuscript under the convention that $\infty-\infty\,\,\text{is set to}\,-\infty$.
	
	\begin{definition}[VaR, ES and adjusted ES]
		Let $Y \in L^0$ and $X\in L^1$, then we define
		\begin{equation*} 
			\text{VaR}_p(Y)\coloneqq 
			\begin{cases}
				\text{inf}\{y\in \mathbb{R} \, |\, \mathbb{P}(Y\leq y) \geq p\} & \text{if}\,\, p \in (0,1], \\ 
				\essinf Y & \text{if}\,\, p=0,
			\end{cases} 
		\end{equation*}
		
		\begin{equation*} 
			\text{ES}_p(X)\coloneqq 
			\begin{cases} 
				\frac{1}{1-p} \,$$\int_{p}^{1}\text{VaR}_q(X)\,\mathrm{d}q$$ & \text{if}\,\, p \in [0,1), \\ 
				\esssup X & \text{if}\,\, p=1. 
			\end{cases} 
			\\
		\end{equation*}
        
		Given a function $g\in\mathcal{G}$, the adjusted ES is defined as
		\begin{equation*}
			\text{ES}^g(X) \coloneqq \sup_{p\in [0,1]}\{\text{ES}_p(X)-g(p)\}.
		\end{equation*}
	\end{definition}

	In the following, we introduce a monetary risk measure and standard properties of it. To do so, note that a map $\rho:\mathcal{X}\rightarrow [-\infty,\infty]$ is called to be convex, subadditive, positively homogeneous, or star-shaped, whenever its epigraph $\text{epi}(\rho)$ is convex, closed under addition, a cone, or star-shaped, respectively. If the range of the map $\rho$ is a subset of $(-\infty,+\infty]$, then we get the following four statements: (1) $\rho$ is convex iff $\rho(\lambda X+(1-\lambda)Y)\leq \lambda\rho(X)+(1-\lambda)\rho(Y)\text{ for all }X,Y\in \mathcal{X}\text{ and }\lambda\in[0,1]$. (2) $\rho$ is  subadditive iff $\rho(X+Y)\leq \rho(X)+\rho(Y) \text{ for all } X,Y\in\mathcal{X}$. (3) $\rho$ is positive homogeneous iff $\rho(\lambda X) = \lambda \rho(X) \text{ for all } X\in \mathcal{X}\text{ and }  \lambda\geq 0.$ (4) $\rho$ is star-shaped iff $\rho(\lambda X)\geq \lambda \rho(X)\text{ for all }X\in\mathcal{X}\text{ and all } \lambda>1.$ The latter is equivalent to $\rho(\lambda X)\leq \lambda \rho(X)\text{ for all }X\in\mathcal{X}\text{ and all } \lambda\in[0,1].$ 
	\begin{definition}(Monetary risk measure)\label{def}
		A map $\rho:\mathcal{X}\rightarrow [-\infty,\infty]$ is called a \textbf{\textit{risk functional}}, if it satisfies the following three properties:		
		\begin{enumerate}
			\item[(i)] Monotonicity: $\rho(X) \leq \rho(Y) \text{ for all } X,Y\in \mathcal{X} \text{ with } X\leq Y,$
			\item[(ii)] Cash additivity: $\rho(X+m)=\rho(X)+m \text{ for all } X\in \mathcal{X}\text{ and }m\in \mathbb{R},$
		\end{enumerate}
        If $\rho>-\infty$, then we call $\rho$ a \textbf{\textit{(monetary) risk measure}}. A risk measure is coherent, if it is convex and positive homogeneous.
	\end{definition}
	
    Now, we are in the position to generalize the adjusted ES. 
 
	\begin{definition}\label{123}
		Let $\Rho = \{\rho_p\}_{p\in[0,1]}$ be a family of risk functionals and $g\in\mathcal{G}$. The map $\adjustedRiskMeasure:\mathcal{X}\rightarrow[-\infty,\infty]$ is defined for every $X \in \mathcal{X}$ by
		$$\adjustedRiskMeasure(X) \coloneqq \sup_{p\in [0,1]}\{\rho_p(X)-g(p)\}.$$

        The corresponding acceptance set is defined by $\mathcal{A}_{\Rho,g}\coloneqq \left\{X\in \mathcal{X}\,|\,\adjustedRiskMeasure(X)\leq 0\right\}.$
	\end{definition}
	
	\begin{remark}
	    For every $X\in\mathcal{X}$ we obtain that  $\adjustedRiskMeasure(X) = \inf\{m\in \mathbb{R}\,|\,X-m\in \mathcal{A}_{\Rho,g}\}.$
	\end{remark}
	
	The construction of the map $\adjustedRiskMeasure$ is inspired by the definition of the adjusted ES and it allows us to be even more flexible in capturing the tail behavior of the financial position. Indeed, if we set $\{\rho_p\}_{p\in[0,1]} = \{\text{ES}_p\}_{p\in[0,1]}$, then we obtain the adjusted ES as a special case. Other meaningful examples for families of risk measures $\mathcal{P}$ are discussed in Sections~\ref{sec:lvar} and~\ref{sec:def_new_risk_measures}.
    
	In general, if the family of risk functionals $\Rho$ consists of monetary risk measures, then the map $\adjustedRiskMeasure$ is a monetary risk measure, as we show next.
	
	\begin{proposition}\label{prop:adjustedRiskMeasure}
		The map $\adjustedRiskMeasure$ from Definition \ref{123} is a risk functional. Furthermore, if $\Rho$ is a family of risk measures, then $\adjustedRiskMeasure$ is a risk measure.
	\end{proposition}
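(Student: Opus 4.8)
The plan is to verify the two defining properties of a risk functional directly from the definition, exploiting at every level $p\in[0,1]$ that $\rho_p$ is itself a risk functional, and then to deduce finiteness from below from the defining property of the class $\mathcal{G}$. Throughout I would keep the convention $\infty-\infty=-\infty$ in mind, since both $g(p)=\infty$ and (for the first assertion) $\rho_p(X)=-\infty$ are permitted.

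First I would treat monotonicity. For $X,Y\in\mathcal{X}$ with $X\leq Y$, monotonicity of each $\rho_p$ gives $\rho_p(X)\leq\rho_p(Y)$, hence $\rho_p(X)-g(p)\leq\rho_p(Y)-g(p)$ for every $p$; this comparison stays valid in $[-\infty,\infty]$ because subtracting the common value $g(p)$ is order preserving even under the stated convention. Taking the supremum over $p\in[0,1]$ on both sides then yields $\adjustedRiskMeasure(X)\leq\adjustedRiskMeasure(Y)$.

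Next I would establish cash additivity. Fixing $X\in\mathcal{X}$ and $m\in\mathbb{R}$, cash additivity of $\rho_p$ gives $\rho_p(X+m)=\rho_p(X)+m$, so the identity to check is $\rho_p(X+m)-g(p)=(\rho_p(X)-g(p))+m$ for every $p$. This is immediate when $\rho_p(X)$ and $g(p)$ are both finite; the points needing care are those with $\rho_p(X)\in\{\pm\infty\}$ or $g(p)=\infty$, where one confirms by a short case distinction that shifting by the finite constant $m$ is compatible with $\infty-\infty=-\infty$. Once the identity holds, I would invoke that translating all elements of a subset of $[-\infty,\infty]$ by a fixed finite $m$ shifts its supremum by exactly $m$ (including when the supremum is $\pm\infty$) to conclude $\adjustedRiskMeasure(X+m)=\adjustedRiskMeasure(X)+m$. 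This settles that $\adjustedRiskMeasure$ is a risk functional.

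Finally, for the risk measure claim I would show $\adjustedRiskMeasure(X)>-\infty$ for every $X$. By definition of $\mathcal{G}$ there is some $p_0\in(0,1]$ with $g(p_0)<\infty$, and since $\Rho$ now consists of risk measures we have $\rho_{p_0}(X)>-\infty$; therefore $\rho_{p_0}(X)-g(p_0)>-\infty$ and consequently $\adjustedRiskMeasure(X)\geq\rho_{p_0}(X)-g(p_0)>-\infty$. I expect the only genuinely delicate point to be the extended-real arithmetic in the cash-additivity step, namely checking that the convention $\infty-\infty=-\infty$ interacts correctly with the finite shift by $m$; the remaining arguments are routine order- and supremum-manipulations.
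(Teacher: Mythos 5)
Your proof is correct and follows essentially the same route as the paper: verify monotonicity and cash additivity termwise and pass to the supremum, then bound $\adjustedRiskMeasure(X)$ from below by a single term of the supremum. The only cosmetic difference is that the paper bounds below using $p=0$ (where $g(0)<\infty$ follows from $g$ being increasing), whereas you use a $p_0\in(0,1]$ with $g(p_0)<\infty$ supplied directly by the definition of $\mathcal{G}$; both work.
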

	\begin{proof}
        The properties (i) and (ii) in Definition \ref{def} follow immediately from the fact that every $\rho_p$ admits these properties. To prove the second claim, let $X\in \mathcal{X}$ and $\Rho$ be a family of risk measures. Then, we get that $$\adjustedRiskMeasure(X) = \sup_{p\in [0,1]}\{\rho_p(X)-g(p)\}\geq\rho_0(X)-g(0)> -\infty.$$
	\end{proof}

 \begin{definition}[Adjusted risk measure]
     If the map $\adjustedRiskMeasure$ from Definition \ref{123} is a monetary risk measure, then we call it an \textbf{\textit{adjusted risk measure}}.
 \end{definition}
	
    \subsection{Examples: LVaR and beyond}\label{sec:lvar}

    There exists already another example of an adjusted risk meausre in the literature, namely the Loss VaR (LVaR), as introduced in~\cite{LVaR}.
    \begin{definition}[LVaR]
        For an increasing and right-continuous function $\alpha : [0,\infty) \rightarrow (0,1]$, we set  
		
		\begin{equation*}
			\text{LVaR}_{\alpha}(Y) \coloneqq \sup_{u \geq 0}\{\text{VaR}_{\alpha (u)}(Y)-u\}.
		\end{equation*}
    \end{definition}

    The similarity between adjusted risk measures and the LVaR is more complex and the next result allows us to discuss this in more detail. It is a generalization of~\cite[Proposition 4]{LVaR} and the proof works analogously to the one of~~\cite[Proposition 4]{LVaR}.
    
    To formulate this result, we say that a family of risk functionals $\Rho=\{\rho_{p}\}_{p\in[0,1]}$ is ordered, if for any two elements $\rho_p,\rho_q \text{ with } p\leq q$ it holds that
	$\rho_p(X)\leq \rho_q(X) \text{ for all }  X \in \mathcal{X}$, i.e.,~the family $\Rho$ is ordered via the pointwise partial order for maps.
	\begin{proposition}\label{rep}
		Let $X\in \mathcal{X}$, $\Rho = \{\rho_p\}_{p\in[0,1]}$ be an ordered family of risk functionals and $g\in \mathcal{G}$ be a left-continuous function. Then it holds that
		$$\adjustedRiskMeasure(X) = \sup_{u\in \textup{Im}(g)}\left\{\rho_{g_+^{-1}(u)}(X)-u\right\},$$
        where $g_+^{-1}(u) =\sup\{p\in[0,1]\,|\,g(p)\leq u\}$.
	\end{proposition}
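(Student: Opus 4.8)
The plan is to prove the two inequalities separately, reading the claimed identity as a change of variables that replaces the index set $[0,1]$ by the value set $\textup{Im}(g)$ of the target risk profile. Throughout I would exploit that $\adjustedRiskMeasure(X)$ only ``sees'' those $p$ with $g(p)<\infty$: whenever $g(p)=\infty$, the convention $\infty-\infty=-\infty$ forces $\rho_p(X)-g(p)=-\infty$, so such $p$ never contribute to either supremum and may be discarded. Note also that $g_+^{-1}(u)\in[0,1]$ for every $u$, so $\rho_{g_+^{-1}(u)}$ is well defined.

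For the inequality $\adjustedRiskMeasure(X)\le \sup_{u\in\textup{Im}(g)}\{\rho_{g_+^{-1}(u)}(X)-u\}$, I would fix $p\in[0,1]$ with $g(p)<\infty$ and set $u\coloneqq g(p)\in\textup{Im}(g)$. Since $p$ itself satisfies $g(p)\le u$, the definition of the upper inverse gives $g_+^{-1}(u)=\sup\{p'\in[0,1]\,|\,g(p')\le u\}\ge p$. The ordering of $\Rho$ then yields $\rho_p(X)\le \rho_{g_+^{-1}(u)}(X)$, hence
$$\rho_p(X)-g(p)=\rho_p(X)-u\le \rho_{g_+^{-1}(u)}(X)-u\le \sup_{u'\in\textup{Im}(g)}\{\rho_{g_+^{-1}(u')}(X)-u'\}.$$
Taking the supremum over all admissible $p$ gives the desired bound.

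For the reverse inequality, the crucial step --- and the one where left-continuity of $g$ is indispensable --- is to establish that $g(g_+^{-1}(u))\le u$ for every $u\in\textup{Im}(g)$. I would argue as follows: since $g$ is increasing, the sublevel set $S\coloneqq\{p\in[0,1]\,|\,g(p)\le u\}$ is a nonempty (as $u\in\textup{Im}(g)$) down-set, i.e.~an interval $[0,p^\ast]$ or $[0,p^\ast)$ with $p^\ast=g_+^{-1}(u)$. For every $p<p^\ast$ we have $g(p)\le u$, so left-continuity gives $g(p^\ast)=\lim_{p\uparrow p^\ast}g(p)=\sup_{p<p^\ast}g(p)\le u$; the degenerate case $p^\ast=0$ is immediate since then $0\in S$. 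With this in hand, for each $u\in\textup{Im}(g)$ we obtain
$$\rho_{g_+^{-1}(u)}(X)-u\le \rho_{g_+^{-1}(u)}(X)-g(g_+^{-1}(u))\le \sup_{p\in[0,1]}\{\rho_p(X)-g(p)\}=\adjustedRiskMeasure(X),$$
and taking the supremum over $u$ completes the proof.

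The main obstacle is precisely the identity $g(g_+^{-1}(u))\le u$: without left-continuity the increasing function $g$ could jump strictly above $u$ at $p^\ast$, breaking the first inequality in the last display. A mild technical point to double-check is that no $\infty-\infty$ ambiguity arises in these estimates; this is ensured because $u\in\textup{Im}(g)$ with $g(p^\ast)\le u<\infty$ is finite, so subtracting $u$ or $g(p^\ast)$ behaves monotonically on $[-\infty,\infty]$ regardless of the (possibly infinite) value of $\rho_{g_+^{-1}(u)}(X)$.
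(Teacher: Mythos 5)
Your argument is correct and follows essentially the same route as the paper: for ``$\leq$'' you set $u=g(p)$ and use $g_+^{-1}(u)\geq p$ together with the ordering of $\Rho$, and for ``$\geq$'' you use left-continuity to get $g(g_+^{-1}(u))\leq u$, which is exactly the paper's two-step comparison (you merely spell out the sublevel-set argument that the paper leaves implicit). The only cosmetic caveat is that $\textup{Im}(g)$ may contain $u=\infty$, in which case the corresponding term is $-\infty$ by convention and the inequality is trivial, so your claim that every $u\in\textup{Im}(g)$ is finite is not needed and not quite accurate, but this does not affect the proof.
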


    \begin{proof}
		
		To prove the inequality ``$\geq$'', let $u\in \text{Im}(g)$. Set $p=g_+^{-1}(u)$. Then, by the left continuity of $g$, it holds that $g(p)\leq u$. Thus, by $ \Rho$ being ordered we obtain that
		$$ \rho_p(X)-g(p)=\rho_{g_+^{-1}(u)}(X)-g(p) \geq \rho_{g_+^{-1}(u)}(X) - u.$$

		To prove the inequality ``$\leq$'', assume an arbitrary $p\in[0,1]$. We start with the situation of $g(p)<\infty$. Set $u=g(p)$. Then, it holds that $g_+^{-1}(u)\geq p$. Thus, by $ \Rho$ being ordered we obtain that
		$$ \rho_p(X)-g(p)=\rho_p(X)-u \leq \rho_{g_+^{-1}(u)}(X) - u.$$  
        
        The inequality also holds for $g(p)=\infty$, due to the convention of $\infty-\infty \,\,\text{is set to}\, -\infty$.
	\end{proof}

	\begin{remark}\label{rem:LVaR_as_adjustedRM}
		If the map $p\mapsto \rho_p(X)$ is left-continuous, then we are able to replace $u\in \text{Im}(g)$ with $u\geq 0$ in the previous result.
		Then, the representation looks similar to the $\LVaR$, but the definitions are not the same, because the domain and image of the functions $g_+^{-1}:[c,\infty]\rightarrow[0,1]$ for some $c\in\mathbb{R}$ and $\alpha:[0,\infty)\rightarrow(0,1]$ do not agree. So, on the one side, not every right-continuous inverse $g_+^{-1}$ of a left-continuous function $g\in\mathcal{G}$ can be used as function $\alpha$ in the definition of the  $\LVaR$. But, on the other side, if the function $\alpha$ from an $\LVaR$ is given, then for the function $g\in\mathcal{G}$ defined by $($with convention $\infty\cdot 0 \,\,\text{is set to}\,\,0)$
        $$g(p)=\alpha_-^{-1}(p)\cdot\mathds{1}_{[a(0),\,\sup_{u \geq 0}\alpha(u)]}(p)+\infty \cdot\mathds{1}_{(\sup_{u \geq 0}\alpha(u),1]}(p),$$
		where $\alpha_-^{-1}(p) = \inf\{x\in[0,\infty)\,|\,\alpha(x)\geq p\}$, we get by~\cite[Proposition 4]{LVaR} that for the family of functionals chosen by $\Rho = \{\VaR_p\}_{p\in[0,1]}$ it holds that 
		$\adjustedRiskMeasure(X)=\sup_{p \in [0,1]}\{\VaR_p(X)-g(p)\}=\LVaR_{\alpha}(X),$
		because of $\sup_{p \in [0,\alpha(0))}$ $\{\VaR_{p}(X)\}\leq \sup_{p \in [\alpha(0),\sup_{u \geq 0}\alpha(u)]}\{\VaR_p(X)-g(p)\}=\LVaR_{\alpha}(X)$.
	\end{remark}

    \begin{remark}\label{-infty}
        Since ${\VaR}_0$ is not a risk measure (because its domain $\mathcal{X}$ contains random variables which are not essentially bounded from below), Proposition~\ref{prop:adjustedRiskMeasure} is not applicable for $\Rho = \{\VaR_p\}_{p\in[0,1]}$. 
		But, note that $s=\sup\{p \in [0,1]\,|\,g(p)<\infty\}>0$. Let $u\in(0,s)$.  Then,
		$$\adjustedRiskMeasure(Y) = \sup_{p\in [0,1]}\{\VaR_p(Y)-g(p)\}\geq \VaR_{u}(Y)-g(u)>-\infty,$$
		due to the finiteness of $\VaR_u$. So, $\adjustedRiskMeasure$ is an adjusted risk measure.
	\end{remark}

    \begin{remark}
        \begin{enumerate}
            \item[(i)] In Proposition~\ref{rep} it is assumed that the family $\Rho = \{\rho_p\}_{p\in[0,1]}$ is ordered. This is a natural assumption which was used in the literature before, see e.g.,~\cite[Section 3.2]{pelve}. The families $\{\VaR_p\}_{p\in[0,1]}$ and $\{\ES_p\}_{p\in[0,1]}$ are ordered.
            \item[(ii)] In Remark~\ref{rem:LVaR_as_adjustedRM} we used the assumption that the map $p\mapsto \rho_p(X)$ is left-continuous. This is also satisfied for $\{\VaR_p\}_{p\in[0,1]}$ and $\{\ES_p\}_{p\in[0,1]}$. 
        \end{enumerate}
        The upcoming examples of families of risk measures satisfy both properties, see Examples~\ref{exam:ConditionalES},~\ref{exam:REVaR} and Section~\ref{sec:def_new_risk_measures}.  
    \end{remark}

    Now, we discuss different situations for the family of risk measures $\mathcal{P}$. The first one is motivated by the Solvency II regulation for insurers. The second and third are taken from the existing literature. 

    \begin{example}
        Solvency II requires that the future equity position $X$ of an insurance company satisfies $\VaR_{0.995}(-X)\leq 0$. Now, suppose that the insurer would like to improve the risk management by internally ensuring that the Expected Shortfall of $X$ at some level $p$, for instance $p=0.975$, is less than some constant $c<0$. This means $\ES_{0.975}(-X)\leq c$, i.e.,~the insurer would like to add a buffer for the internal model based on ES and solely would like to satisfy the legal requirements based on VaR. The total capital requirement of the insurer is then given as $\max\{\ES_{0.975}(-X)-c,\VaR_{0.995}(-X)\}$. This is the adjusted risk measure for $\mathcal{P} = \{\rho_p\}_{p\in[0,1]}$ with $\rho_p(X) =\ES_p(X)\cdot\mathds{1}_{[0,0.975]}(p) + \VaR_{p}(X)\cdot \mathds{1}_{(0.975,1]}(p)$ and $g(p) = c\cdot\mathds{1}_{[0,0.975]}(p) + \infty \cdot \mathds{1}_{(0.995,1]}(p)$, with the convention $\infty\cdot 0 \,\,\text{is set to}\,\, 0$.   
    \end{example}

    \begin{example}\label{exam:REVaR}
        Two examples of families of risk measures $\mathcal{P} = \{\rho_p\}_{p\in[0,1]}$ applied in the context of adjusted risk measures are the adjusted Rényi entropic VaR and the adjusted higher-order ES. For details on their definitions we refer to \cite{ZOU2023255} and \cite{ZOU20241}. For the adjusted Rényi entropic VaR, a family of entropic VaRs of some fixed order is used. This family is ordered and the map $p\mapsto \rho_p(X)$ by~\cite[Proposition 2.4]{ZOU2023255} continuous. For the adjusted higher-order ES, a family of higher-order ES is used. It is easy to see that this family is ordered. Furthermore, the map $p\mapsto \rho_p(X)$ is continuous, see~\cite[Lemma 3.7]{ZOU20241}.
    \end{example}    

    \begin{example}\label{exam:ConditionalES}
        As another example for the family $\mathcal{P} = \{\rho_p\}_{p\in[0,1]}$ we could use the conditional ES, given as $\rho_p(X) = \frac{1}{1-p}\int_{p}^1 \ES_q(X)\,\mathrm{d}q$ for $p\in[0,1)$ and $\rho_1(X) = \esssup X$. This family of risk measures is used for instance in~\cite[Section 3.2.1]{pelve}. This family $\mathcal{P}$ is ordered and for all $X$, the map $p\mapsto \rho_p(X)$ is left-continuous, because of integrating with respect to the Lebesgue-measure.
    \end{example}  
    
    So, there exist various examples of adjusted risk measures relevant for practice as well as of theoretical interest. 

    \subsection{Basic properties}

    As for monotonicity and cash additivity, we expect that if the underlying risk functionals have a certain property in common, then this property translates to the adjusted risk measure. This is true for convexity, star-shapedness and the following properties (ii)-(iv). If the target risk profile $g$ satisfies $g(0)=0$, then this is also true for the upcoming property (i).
    \begin{definition}
        We introduce the following properties of a risk measure $\rho$:
		\begin{enumerate}
            \item[(i)] Normalization: $\rho(0)=0,$
			\item[(ii)] Law-invariance: $\rho(X) = \rho(Y) \text{ for all }X,Y\in \mathcal{X}\text{ such that } X\sim Y,$
			\item[(iii)] Surplus invariance: $\rho(X) = \rho(\text{max}\{X,0\}) \text{ for all } X\in \mathcal{X} \text{ such that } \rho(X)\geq0 .$
		\end{enumerate}
        For $\mathcal{X}=L^1$, we introduce the following property:
        \begin{enumerate}
			\item[(iv)] Consistency with SSD: $\rho(X)\leq \rho(Y) \text{ for all }X,Y\in L^1\text{ such that } X\geq_{\text{SSD}}Y.$
        \end{enumerate}
    \end{definition}
    
    As mentioned, a bunch of properties translate from the underlying risk functionals to the adjusted risk measure. But, these situations are only sufficient to obtain the corresponding property of the map $\adjustedRiskMeasure$. So, the question arises of whether it is possible to obtain a certain property, but one of the underlying risk functionals does not satisfy this property. Example~\ref{ex123} shows that this is indeed possible. It uses a family of risk measures which is in general not convex, namely a family that contains $\VaR$ and ES. The example is based on step functions as target risk profiles. This implies that the ES values dominate the $\VaR$ values which results in a representation for the adjusted risk measure that is completely characterized by the ES values. In the example, we use the convention that $\infty\cdot 0 \,\,\text{is set to}\,\,0$.
	
	\begin{example}\label{ex123}
        Let $n\in\mathbb{N}$, $0=r_1<\dots<r_{n}<r_{n+1}=\infty$ and $0=p_0<p_1<\dots<p_n<p_{n+1}=1$ be arbitrary. We set $g\in\mathcal{G}$  as
        $$g(p)=r_1\cdot\mathds{1}_{[0,p_1]}(p)+\sum_{i=2}^{n+1}r_i\cdot\mathds{1}_{(p_{i-1},p_i]}(p).$$ Such a function is called a \textit{\textbf{step function}}. 
        
        Now, let $X\in L^1$, then we choose $\{\rho_p\}_{p\in[0,1]}$ as follows
        $$\rho_p(X)= \text{VaR}_p(X)\cdot\mathds{1}_{(p_1,p_2)}(p) + \text{ES}_p(X)\cdot\mathds{1}_{[0,1]\setminus (p_1,p_2)}(p)$$
		We obtain from 
		$$\sup_{p \in (p_1,p_2)}\{\text{VaR}_p(X)-g(p)\}=\sup_{p \in (p_1,p_2)}\{\text{VaR}_p(X)-r_2\}\leq\text{VaR}_{p_2}(X)-r_2\leq\text{ES}_{p_2}(X)-r_2$$
		that the adjusted risk measure is given by
		$\adjustedRiskMeasure(X)=\max_{i\in\{1,...,n\}}\{\text{ES}_{p_i}(X)-r_i\}$. Thus, we get that $\adjustedRiskMeasure$ is convex, but there are risk functionals in $\Rho$ who are not, namely all $\VaR$s. So the assumptions on the family of risk functionals are only sufficient, but not necessary.
	\end{example}

    \begin{remark}
        Another property related to our new concept is discussed in \cite{ptail}. This is the so-called $p$-tail property, which states that for a given value $p\in (0,1)$, a risk measure $\rho$ is a $p$-tail risk measure if for all $X,Y\in \mathcal{X}$ with $\text{VaR}_q(X) = \text{VaR}_q(Y)$ for every $q\in[p,1)$ it holds that $\rho(X)=\rho(Y)$. Note that every $p$-tail risk measure is law-invariant. Since an adjusted risk measure does not need to be law-invariant, it is not automatically a $p$-tail risk measure. However, if we assume that the family of risk functionals $\Rho=\{\rho_{s}\}_{s\in[0,1]}$ is ordered and the map $p\mapsto \rho_p(X)$ is left-continuous, then $\adjustedRiskMeasure$ is a $p$-tail risk measure when $g$ is constant on $(0,p)$ and $\rho_s$ has the p-tail property for $s\in[p,1)$.
    \end{remark}

    \subsection{Positive homogeneity and subadditivity}
	The analysis of subadditivity and positive homogeneity of an adjusted risk measure is more complex. To obtain equivalent properties, we use the following weak assumption, where $\mathcal{G}_0$ denotes the set of all functions $g\in\mathcal{G}$ with~$g(0) = 0$.
	\begin{assumption}\label{ass}
		Let $\Rho = \{\rho_p\}_{p\in[0,1]}$ be a family of risk functionals. Then the following should hold for all functions $g\in\mathcal{G}_0$ with $g(p)\in(0,\infty)$ for at least two $p\in(0,1]$:
		    There exist $\lambda>1$, $\varepsilon>0$ and $X\in\mathcal{X}$ with $q= \sup\{p\in[0,1]\,|\,g(p)=0\}$ such that 
      $$\sup_{p \in [0,1]}\{\rho_p(X)-g(p)\}=\sup_{p \in [q+\varepsilon,1]}\{\rho_p(X)-g(p)\}>-\infty\quad \text{and} \quad \sup_{p \in (q,1]}\left\{\rho_p(X)-\frac{g(p)}{\lambda}\right\}<\infty.$$
	\end{assumption}

    \begin{remark}
        Assumption~\ref{ass} implies that we can find a loss $X$ for which the supremum can be restricted to levels for which the target risk profile is strictly greater than $0$. Further, we obtain finiteness for the supremum with the target risk profile scaled by $\frac{1}{\lambda}$. In other words, Assumption~\ref{ass} guarantees that there exists a non-trivial case for the adjusted risk measure, where the target risk profile becomes relevant for the calculation of the adjusted risk measure.
    \end{remark}
	
    We can formulate an even more intuitive and stronger condition than Assumption~\ref{ass}, which is still satisfied by all positive homogeneous examples of families of risk functionals that we are using later on. This stronger condition is the cornerstone of the next proposition.

    \begin{proposition}\label{propalt}
        Let the family of risk functionals $\Rho$ be ordered and each risk functional is positive homogeneous. If it exists $X\in \mathcal{X}$ satisfying the following properties:
        \begin{enumerate}[(i)]
            \item $\rho_{p_1}(X)<\rho_{p_2}(X)$ for all $p_1,p_2\in[0,1]$ with $p_1<p_2$,
            \item $\rho_0(X)>-\infty$ and $\rho_1(X)<\infty,$
        \end{enumerate}
        then Assumption~\ref{ass} holds.
    \end{proposition}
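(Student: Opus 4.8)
The plan is to exploit positive homogeneity: I would rescale the distinguished position $X$ by a suitable constant so that the supremum defining $\adjustedRiskMeasure$ is forced onto the region where $g$ is strictly positive, while the boundedness in (ii) keeps everything finite. Fix once and for all a function $g\in\mathcal{G}_0$ with $g(p)\in(0,\infty)$ for at least two points $p\in(0,1]$, write $q=\sup\{p\in[0,1]\,|\,g(p)=0\}$, and set $f(p):=\rho_p(X)$ for the position $X$ supplied by the hypothesis. By (i) the map $f$ is strictly increasing on $[0,1]$, and combining this with (ii) gives $-\infty<f(0)\leq f(p)\leq f(1)<\infty$ for every $p$, so $f$ is finite-valued and bounded. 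I would then record the elementary structure of $g$: since $g$ is increasing with $g(0)=0$ we have $g\geq 0$ on $[0,1]$, and the definition of $q$ together with monotonicity gives $g\equiv 0$ on $[0,q)$ and $g(p)>0$ for all $p>q$. Choosing $p_1<p_2$ in $(0,1]$ with $g(p_1),g(p_2)\in(0,\infty)$, positivity of $g(p_1)$ forces $q<p_1$, so in particular $q<1$.

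The main step is selecting the witnessing data. I would fix any $\varepsilon\in(0,p_1-q)$, so that $q<q+\varepsilon<p_1$, and put $\delta:=f(p_1)-f(q+\varepsilon)$, which is strictly positive by the strict monotonicity in (i). By positive homogeneity the scaled position $cX\in\mathcal{X}$ satisfies $\rho_p(cX)=c\,f(p)$ for $c>0$, and I would take the witness to be $Y=cX$ with $c$ chosen large enough that $c\,\delta\geq g(p_1)$ (note that the scaling factor $c$, and hence the witness, is allowed to depend on $g$, as Assumption~\ref{ass} only asks for existence for each given $g$). For the first identity in Assumption~\ref{ass}, observe that on $[0,q+\varepsilon)$ one has $g\geq 0$ and $f(p)\leq f(q+\varepsilon)$, whence
$$\sup_{p\in[0,q+\varepsilon)}\{c f(p)-g(p)\}\leq c\,f(q+\varepsilon)\leq c\,f(p_1)-g(p_1)\leq \sup_{p\in[q+\varepsilon,1]}\{c f(p)-g(p)\},$$
where the middle inequality is exactly $c\delta\geq g(p_1)$ and the last uses $p_1\in[q+\varepsilon,1]$. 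This yields the required equality $\sup_{p\in[0,1]}\{cf(p)-g(p)\}=\sup_{p\in[q+\varepsilon,1]}\{cf(p)-g(p)\}$, and the common value is at least $c f(p_1)-g(p_1)>-\infty$ since both terms are finite.

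Finally, the second (finiteness) requirement is nearly immediate and does not even constrain $\lambda$: for any $\lambda>1$ and any $p\in(q,1]$ we have $g(p)>0$ and $f(p)\leq f(1)$, so
$$\sup_{p\in(q,1]}\left\{c f(p)-\frac{g(p)}{\lambda}\right\}\leq c\,f(1)<\infty.$$
Picking any $\lambda>1$ together with the above $\varepsilon$ and $Y=cX$ then verifies Assumption~\ref{ass}. I expect the domination estimate in the first identity to be the only delicate point: the role of positive homogeneity is precisely to inflate the finite, strictly positive gap $\delta$ produced by the strict monotonicity (i) until it overwhelms the fixed number $g(p_1)$, so that the part of the supremum coming from $[0,q+\varepsilon)$ (where $g$ vanishes) can never beat the contribution at $p_1$. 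The boundedness $f(1)<\infty$ from (ii) then disposes of the finiteness condition essentially for free, which is why no subtlety in the choice of $\lambda$ arises here.
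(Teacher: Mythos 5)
Your strategy is the same as the paper's: use positive homogeneity to rescale $X$ until the strict gap between the values $\rho_p(X)$ at different levels overwhelms the finite value of $g$, so that the supremum is forced onto a region where $g>0$, while $\rho_1(X)<\infty$ disposes of the finiteness condition. The one genuine flaw is the claim that $g(p_1)>0$ forces $q<p_1$: this fails when $g$ vanishes on all of $[0,p_1)$ and jumps to a positive value at $p_1$, in which case $q=\sup\{p\in[0,1]\,|\,g(p)=0\}=p_1$, your interval $(0,p_1-q)$ for $\varepsilon$ is empty, and the witness cannot be chosen. The repair is immediate and is exactly what the paper does when it notes $q<p_2$: since $q\le p_1<p_2$, one always has $q<p_2$ strictly, so run your argument with $p_2$ in place of $p_1$ (take $\varepsilon\in(0,p_2-q)$, $\delta=f(p_2)-f(q+\varepsilon)>0$, and $c$ large enough that $c\,\delta\ge g(p_2)$); every subsequent inequality goes through unchanged. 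With that substitution the proof is correct and matches the paper's in substance, differing only cosmetically in that you dominate the low-level part of the supremum directly by the value at $p_2$, whereas the paper first splits the supremum at the point $q$ and restricts to $(q,1]$ before further restricting to $[p_2,1]$.
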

    \begin{proof} 
    Let $g\in\mathcal{G}_0$ with $g(p_1),g(p_2)\in(0,\infty)$ for $p_1,p_2\in(0,1]$ with $p_1 < p_2$. Now, let $q$ be as in Assumption~\ref{ass}. Note that $q<p_2$. Then, for every $\alpha>0$ and $\lambda>1$ we get by (ii) that
    $$\sup_{p\in[0,1]}\{\rho_p(\alpha X)-g(p)\}\geq \rho_0(\alpha X) = \alpha\rho_0(X)>-\infty $$ and 
    $$\sup_{p\in[0,1]}\left\{\rho_p(\alpha X)-\frac{g(p)}{\lambda}\right\}\leq \alpha\rho_1( X)<\infty.$$
        Now, we conclude that
        \begin{align*}
            \sup_{p \in [0,1]}\{\rho_p(\alpha X)-g(p)\}&= \alpha \sup_{p \in [0,1]}\left\{\rho_p(X)-\frac{g(p)}{\alpha}\right\}\\
            &= \alpha \max\left\{\lim_{p\uparrow q}\rho_p(X),   \rho_q(X)-\frac{g(q)}{\alpha},\sup_{p \in (q,1]}\left\{\rho_p(X)-\frac{g(p)}{\alpha}\right\}\right\},
        \end{align*}
        where we used property (i) in the second equation.  From this, we conclude that for every $\alpha>\frac{g(p_2)}{\rho_{p_2}(X)-\rho_q(X)}$ it holds that 
        \begin{align}\label{eq:proofSufficientPositiveHomgeneous}
            \sup_{p \in [0,1]}\{\rho_p(\alpha X)-g(p)\}
            =\alpha \sup_{p \in (q,1]}\left\{\rho_p(X)-\frac{g(p)}{\alpha}\right\}. 
        \end{align}
        Further, for $\alpha$ large enough, it holds that 
        \begin{align*}
            \lim_{p\downarrow q}\left(\rho_p(X)-\frac{g(p)}{\alpha}\right)\leq \lim_{p\downarrow q}\rho_p(X) < \rho_{p_2}(X)-\frac{g(p_2)}{\alpha} \leq \sup_{p \in [p_2,1]}\left\{\rho_p(X)-\frac{g(p)}{\alpha}\right\},
        \end{align*}
        where we used (i) to obtain the second inequality. Together with~(\ref{eq:proofSufficientPositiveHomgeneous}), it follows that 
        \begin{align*}
            \sup_{p \in [0,1]}\{\rho_p(\alpha X)-g(p)\}
            = \sup_{p \in [p_2,1]}\left\{\rho_p(\alpha X)-g(p)\right\}.
        \end{align*}
      
        Concluding, Assumption~\ref{ass} is satisfied for $Y=\alpha X.$
    \end{proof}

    \begin{remark}
        The families of risk functionals $\{\ES_p\}_{p\in[0,1]}$ and $\{\VaR_p\}_{p\in[0,1]}$ satisfy the conditions in Proposition~\ref{propalt} for every  essentially bounded continuous random variable $X$.
    \end{remark}
    
	Now we are in the position to state our first result regarding the positive homogeneity and subadditivity. It is an auxiliary result, which gives us sufficient conditions to obtain the aforementioned properties.
 
    \begin{lemma}\label{lem}
		Let $\Rho = \{\rho_p\}_{p\in[0,1]}$ be a family of normalized risk functionals and assume that the map $p\mapsto\rho_p(X)$ be left-continuous for every $X\in\mathcal{X}$. Then, for every $g\in\mathcal{G}_0$ with $g(p)\in (0,\infty)$ for at most one $p\in(0,1]$ it holds:
		\begin{itemize}
			\item[\textup{(i)}] $\adjustedRiskMeasure$ is positive homogeneous if $\rho_p$ is positive homogeneous for all $p\in[0,1]$, 
			\item[\textup{(ii)}] $\adjustedRiskMeasure$ is subadditive if $\rho_p$ is subadditive for all $p\in[0,1]$.
		\end{itemize}
	\end{lemma}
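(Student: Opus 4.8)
The plan is to exploit the very restrictive shape that the hypothesis forces on $g$ in order to collapse $\adjustedRiskMeasure$ into a plain supremum of the $\rho_p$ over the region where $g$ vanishes, and then to transport positive homogeneity and subadditivity through that supremum.

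First I would pin down the structure of $g$. Since $g\in\mathcal{G}_0$ is non-decreasing with $g(0)=0$ and attains a value in $(0,\infty)$ at most once, $g$ must equal $0$ on an initial interval, then possibly take a single finite value $c\in(0,\infty)$ at one point $p^*$, and equal $+\infty$ afterwards; monotonicity forces $g(p)=0$ for every $p<p^*$. Writing $A\coloneqq\{p\in[0,1]\,|\,g(p)=0\}$, which is an interval containing $0$, the goal of this first step is the representation
$$\adjustedRiskMeasure(X)=\sup_{p\in A}\rho_p(X)\qquad\text{for all }X\in\mathcal{X}.$$

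The heart of the argument, and the only place where left-continuity enters, is verifying this representation. In $\sup_{p\in[0,1]}\{\rho_p(X)-g(p)\}$ every level with $g(p)=\infty$ contributes $-\infty$ by the convention $\infty-\infty=-\infty$ and is irrelevant, while every level in $A$ contributes exactly $\rho_p(X)$. The only remaining term is the single finite-positive level $p^*$ (when it exists), contributing $\rho_{p^*}(X)-c$. Here I would use that $g$ vanishes on all of $[0,p^*)$, so left-continuity of $p\mapsto\rho_p(X)$ at $p^*$ gives $\rho_{p^*}(X)=\lim_{p\uparrow p^*}\rho_p(X)\le\sup_{p\in[0,p^*)}\rho_p(X)=\sup_{p\in A}\rho_p(X)$; since $c>0$, the term $\rho_{p^*}(X)-c$ is strictly dominated and may be dropped. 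This is exactly where the assumption ``at most one'' $p$ with $g(p)\in(0,\infty)$ is essential: with two finite-positive levels $p_1<p_2$ the region below $p_2$ would no longer be contained in $A$, and the limit $\lim_{p\uparrow p_2}\rho_p(X)$ could no longer be bounded by $\sup_{p\in A}\rho_p(X)$. I expect this domination step to be the main obstacle, but once the shape of $g$ is understood it is short.

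With the representation in hand the two claims are routine. For (i), when $\lambda>0$ positive homogeneity of each $\rho_p$ lets me pull $\lambda$ out of the supremum, so that $\adjustedRiskMeasure(\lambda X)=\sup_{p\in A}\lambda\rho_p(X)=\lambda\adjustedRiskMeasure(X)$, while $\lambda=0$ is handled by normalization: since $0\in A$, one has $\adjustedRiskMeasure(0)=\sup_{p\in A}\rho_p(0)=0$. For (ii), subadditivity of each $\rho_p$ followed by subadditivity of the supremum yields $\adjustedRiskMeasure(X+Y)=\sup_{p\in A}\rho_p(X+Y)\le\sup_{p\in A}\rho_p(X)+\sup_{p\in A}\rho_p(Y)=\adjustedRiskMeasure(X)+\adjustedRiskMeasure(Y)$. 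Both computations use only elementary properties of suprema, so no further subtlety arises.
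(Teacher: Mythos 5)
Your proof is correct and follows essentially the same route as the paper's: both arguments use left-continuity of $p\mapsto\rho_p(X)$ to show that the single level with $g(p)\in(0,\infty)$ is dominated by the supremum over the initial interval where $g$ vanishes, reducing $\adjustedRiskMeasure$ to $\sup_{p\in A}\rho_p$ there, and then transport positive homogeneity and subadditivity through that supremum. The only difference is presentational — the paper phrases the final step in terms of the epigraph being a cone, respectively closed under addition, while you verify the equivalent functional (in)equalities directly.
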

	\begin{proof}
		We set $q= \sup\{p\in[0,1]\,|\,g(p)=0\}$.
		\begin{itemize}
		\item[(i)] We have to prove that $\text{epi}(\adjustedRiskMeasure)$ is a cone. Note that $\adjustedRiskMeasure$ is normalized. Hence, for $\lambda=0$ we obtain that $\adjustedRiskMeasure(0)\leq 0$. For $\lambda> 0$ and $(X,\alpha)\in \text{epi}(\adjustedRiskMeasure)$, we note that $(X,\alpha) \in \text{epi}(\rho_p)$ for all $p\in[0,q)$. Then, we get
        $$\adjustedRiskMeasure(\lambda X)= \lambda\sup_{p\in [0,q]}\left\{\rho_p(X)-\frac{g(p)}{\lambda}\right\}= \lambda\sup_{p\in [0,q)}\{\rho_p( X)\}\leq \lambda\sup_{p\in [0,q)}\{\alpha\}=\lambda \alpha,$$
		where we used the positive homogeneity of $\rho_p$ in the first equation and the left-continuity of $p\mapsto\rho_{p}(X)$ in the second equation.\\
		\item[(ii)] Let $(X,\alpha),(Y,\beta)\in \text{epi}(\adjustedRiskMeasure)$. From $(X+Y,\alpha +\beta) \in \text{epi}(\rho_p)$ for $p\in[0,q)$ we obtain that
			$$\adjustedRiskMeasure(X+Y)= \sup_{p\in [0,q)}\{\rho_p(X+Y)\}\leq \sup_{p\in [0,q)}\{\alpha+\beta\}=\alpha +\beta$$
		where we used the left-continuity of $p\mapsto\rho_{p}(X)$ in the first equation analogously as in $(\text{i})$ and the subadditivity of $\rho_p$ in the first inequality.
		\end{itemize}
	\end{proof}
	
	Now, under Assumption~\ref{ass} we can even state equivalent conditions to positive homogeneity and subadditivity of an adjusted risk measure. This analysis is inspired by~\cite[Proposition 3.2]{AES}, which says that positive homogeneity of the adjusted ES is equivalent to the target profile being of a specific form. Our result generalizes this observation and~\cite[Proposition 3.2]{AES} is a special case of it.

    \begin{theorem}\label{theo}
        Let $\Rho = \{\rho_p\}_{p\in[0,1]}$ be a family of normalized risk functionals and $g\in\mathcal{G}$. If we assume that for all $X\in\mathcal{X}$ the map $p\mapsto\rho_p(X)$ is left-continuous and Assumption~\ref{ass} holds, then        
		\begin{itemize}
		\item[\textup{(i)}] If for all $p\in[0,1]$ the risk functional $\rho_p$ is positively homogeneous then: 
        
        $\adjustedRiskMeasure$ is positively homogeneous if and only if $g(0)=0$ and $g(p)\in (0,\infty)$ for at most one $p\in(0,1].$
		\item[\textup{(ii)}] If for all $p\in[0,1]$, the risk functional $\rho_p$ is star-shaped and subadditive, $\Rho$ is ordered and $g(0)\leq 0$, then:
  
		$\adjustedRiskMeasure$ is subadditive if and only if $g(0)=0$ and $g(p)\in (0,\infty)$ for at most one $p\in(0,1].$ 
		\end{itemize}
	\end{theorem}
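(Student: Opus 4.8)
My plan is to split each biconditional into its sufficiency and necessity halves and, crucially, to \emph{reduce the subadditive statement} (ii) \emph{to the positively homogeneous statement} (i). The two sufficiency directions require no new ideas: if $g(0)=0$ and $g(p)\in(0,\infty)$ for at most one $p\in(0,1]$, then $g\in\mathcal{G}_0$ and Lemma~\ref{lem} applies directly -- part (i) of the lemma for positive homogeneity and part (ii) for subadditivity. So I would only check that the standing hypotheses (normalization of the $\rho_p$ and left-continuity of $p\mapsto\rho_p(X)$) are exactly those assumed here.

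For the necessity direction of (i) I assume $\adjustedRiskMeasure$ is positively homogeneous. First I would pin down $g(0)=0$: since each $\rho_p$ is normalized and $g$ is increasing, $\adjustedRiskMeasure(0)=\sup_{p\in[0,1]}\{-g(p)\}=-g(0)$, which is finite because $g(0)\in\mathbb{R}$; positive homogeneity gives $\adjustedRiskMeasure(0)=\adjustedRiskMeasure(2\cdot 0)=2\adjustedRiskMeasure(0)$, forcing $\adjustedRiskMeasure(0)=0$ and hence $g(0)=0$. Next I would rule out two profile values in $(0,\infty)$ by contradiction: if $g(p_1),g(p_2)\in(0,\infty)$ for some $p_1<p_2$, then $g\in\mathcal{G}_0$ has at least two such points, so Assumption~\ref{ass} furnishes $\lambda>1$, $\varepsilon>0$ and $X$ with $q=\sup\{p:g(p)=0\}$ such that the supremum defining $\adjustedRiskMeasure(X)$ is attained on $[q+\varepsilon,1]$ while the $g/\lambda$-supremum over $(q,1]$ is finite. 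Because $g>0$ on $(q,1]$, the second clause forces $M:=\adjustedRiskMeasure(X)$ to be finite. Using positive homogeneity of each $\rho_p$ I rewrite $\adjustedRiskMeasure(\lambda X)=\lambda\sup_{p}\{\rho_p(X)-g(p)/\lambda\}$; restricting to $[q+\varepsilon,1]$ and using $g(p)\ge g(q+\varepsilon)>0$ there shows this supremum exceeds $M$ by at least $g(q+\varepsilon)(1-1/\lambda)>0$, so $\adjustedRiskMeasure(\lambda X)>\lambda M=\lambda\adjustedRiskMeasure(X)$, contradicting positive homogeneity. The only delicate point is the finiteness bookkeeping (controlling the endpoint $q$ via left-continuity and handling levels where $g=\infty$), which is routine.

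For (ii) I would reduce to (i) using the elementary fact that a real-valued map which is both star-shaped and subadditive is automatically positively homogeneous: for $\lambda\in(0,1)$ the chain $\rho(X)\leq\rho(\lambda X)+\rho((1-\lambda)X)\leq\lambda\rho(X)+(1-\lambda)\rho(X)=\rho(X)$ forces equality throughout, hence $\rho(\lambda X)=\lambda\rho(X)$, and the case $\lambda\ge1$ follows by applying this to $\lambda X$. Applied to each $\rho_p$ this shows every $\rho_p$ is positively homogeneous, so the hypotheses of part (i) are met. It would then suffice to show that the subadditive map $\adjustedRiskMeasure$ is moreover star-shaped, for then the same fact upgrades $\adjustedRiskMeasure$ to positive homogeneity and part (i) yields $g(0)=0$ together with the at-most-one-point conclusion in one stroke.

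The step I expect to be the main obstacle is precisely verifying that $\adjustedRiskMeasure$ inherits star-shapedness from $\Rho$. Bounding $\adjustedRiskMeasure(\lambda X)\ge\sup_p\{\lambda\rho_p(X)-g(p)\}$ for $\lambda\ge1$ and comparing with $\lambda\adjustedRiskMeasure(X)=\sup_p\{\lambda\rho_p(X)-\lambda g(p)\}$, the desired termwise inequality $-g(p)\ge-\lambda g(p)$ holds only where $g(p)\ge0$; this is exactly where the sign of $g$ enters and where the hypotheses $g(0)\le0$ and the orderedness of $\Rho$ (together with the finiteness from Assumption~\ref{ass}) must be combined to argue that the supremum is governed by levels on which $g\ge0$. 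I would therefore concentrate the effort on showing that, under these hypotheses, the attaining levels lie where $g\ge0$, so that star-shapedness does transfer; once that is secured, the reduction to (i) closes the argument, and I regard this transfer of star-shapedness as the genuine crux of part (ii).
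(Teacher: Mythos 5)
Your treatment of part (i) and of both sufficiency directions is correct and essentially the paper's own argument: sufficiency is Lemma~\ref{lem}, and for the necessity of (i) the paper likewise extracts from Assumption~\ref{ass} a maximizing sequence in $[q+\varepsilon,1]$ and exploits the uniform gap $g(p)\geq g(q+\varepsilon)>0$ to force $\adjustedRiskMeasure(\lambda X)>\lambda\adjustedRiskMeasure(X)$. The problem is the necessity direction of (ii). The step you yourself single out as the crux --- that $\adjustedRiskMeasure$ inherits star-shapedness, so that subadditivity can be upgraded to positive homogeneity and fed into (i) --- is not just delicate; it fails under the stated hypotheses. Only $g(0)\leq 0$ is assumed, so $g$ may be strictly negative on an initial segment, and the supremum defining $\adjustedRiskMeasure(Y)$ may be governed by precisely those levels: with $\Rho=\{\ES_p\}_{p\in[0,1]}$ and $g=-\mathds{1}_{[0,1/2]}+\infty\cdot\mathds{1}_{(1/2,1]}$ one gets $\adjustedRiskMeasure=\ES_{1/2}+1$, which is subadditive yet satisfies $\adjustedRiskMeasure(\lambda Y)=\lambda\ES_{1/2}(Y)+1<\lambda\adjustedRiskMeasure(Y)$ for every $\lambda>1$, so it is not star-shaped (nor positively homogeneous). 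Moreover, Assumption~\ref{ass} only supplies a \emph{single} $X$ for which the supremum localizes to levels where $g>0$, whereas star-shapedness is a statement about every $Y\in\mathcal{X}$; so the localization you hope to extract from ``$g(0)\leq 0$ plus orderedness plus Assumption~\ref{ass}'' is simply not available, and the reduction of (ii) to (i) cannot be closed along these lines.

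The paper's proof of (ii) avoids star-shapedness of $\adjustedRiskMeasure$ entirely and uses star-shapedness of the individual $\rho_p$ only in the one-point, downward-scaling form $\rho_p(X/n)\leq\frac{1}{n}\rho_p(X)$, which gives
\begin{align*}
n\,\adjustedRiskMeasure\!\left(\tfrac{X}{n}\right)\;\leq\;\sup_{p\in[0,1]}\{\rho_p(X)-n\,g(p)\}\;=\;\adjustedRiskMeasureInfimal(X).
\end{align*}
By orderedness and left-continuity the right-hand side decreases to $\sup_{p\in[0,q]}\rho_p(X)=\rho_q(X)$ as $n\to\infty$, while the two-point assumption on $g$ together with Assumption~\ref{ass} forces the strict inequality $\adjustedRiskMeasure(X)>\rho_q(X)$; hence $n\,\adjustedRiskMeasure(X/n)<\adjustedRiskMeasure(X)$ for large $n$, contradicting subadditivity. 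If you want to keep the spirit of your reduction, you must replace the global claim ``$\adjustedRiskMeasure$ is star-shaped'' by this single-$X$, single-direction scaling estimate; as written, your argument has a genuine gap at exactly that step.
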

	\begin{proof}
		Set $q= \sup\{p\in[0,1]\,\,|\,\,g(p)=0\}$ and choose an arbitrary $X\in\mathcal{X}$.

        \begin{itemize}
		      \item[\textup{(i)}] Lemma \ref{lem} (i) gives us the implication ``$\Leftarrow$''. For the proof of the implication ``$\Rightarrow$'', note that positive homogeneity of $\adjustedRiskMeasure$ and normalization of the risk functionals imply for all $\lambda > 0$ that
              \begin{align*}
                  -g(0) = \lambda \adjustedRiskMeasure(0) = -\lambda g(0),
              \end{align*}
              which is only possible if $g(0)=0$. Now, we argue by contraposition. To do so, assume that there exist two points, for which $g$ attains values in $(0,\infty)$. Due to Assumption~\ref{ass}, there exists $\varepsilon>0$ and a sequence $\{p_n\}_{n\in\mathbb{N}}\subseteq [q+\varepsilon,1]$ such that 
              \begin{align*}
                    \adjustedRiskMeasure(X) = \lim_{n\rightarrow\infty}(\rho_{p_n}(X) -g(p_n)).
              \end{align*}
        Note that Assumption~\ref{ass} implies that $|\adjustedRiskMeasure(X)|<\infty$. Therefore, $\rho_{p_n}(X) -g(p_n)\in\mathbb{R}$ for almost every $n$. this gives us that for every $\lambda>1$ it also holds that $\rho_{p_n}(X),g(p_n)\in\mathbb{R}$ for almost every $n$. Together with the positive homogeneity of $\rho_{p}$ for all $p\in[0,1]$, we get that for every $\lambda>1$ it holds $\rho_{p_n}(\lambda X)= \lambda \rho_{p_n}(X)$ for almost every $n$. With the help of this property, we have for every $\lambda>1$ that
        \begin{align*}
            \adjustedRiskMeasure(\lambda X)&\geq\sup_{n \in \mathbb{N}}\left\{\rho_{p_n}(\lambda X)-g(p_n)\right\}\\ 
            &=\lambda \sup_{n \in \mathbb{N}}\left(\rho_{p_n}(X) -\frac{g(p_n)}{\lambda}\right)\\
            &>\lambda \sup_{n \in \mathbb{N}}\left(\rho_{p_n}(X) -g(p_n)\right)\\
            &\geq \lambda \lim_{n \mapsto \infty}\left(\rho_{p_n}(X) -g(p_n)\right)\\
            &= \lambda \adjustedRiskMeasure(X).
        \end{align*} 
        
		So, $(X,\adjustedRiskMeasure(X))\in \text{epi}(\adjustedRiskMeasure)$, but for $\lambda>1$ it holds that $(\lambda X,\lambda \adjustedRiskMeasure(X))\notin \text{epi}(\adjustedRiskMeasure)$. Hence, $\text{epi}(\adjustedRiskMeasure)$ is not a cone, i.e.,~$\adjustedRiskMeasure$ is not positively homogeneous. 
		
		\item[\textup{(ii)}]Lemma \ref{lem} (ii) gives us the implication ``$\Leftarrow$''. To prove the implication ``$\Rightarrow$'', note first that an analogous argumentation as in part (i) shows that $g(0)\geq 0$, which implies that $g(0) = 0$. Now, we again argue by contraposition.

        Due to Assumption~\ref{ass}, there exist $\varepsilon>0$ and a sequence $\{p_n\}_{n\in\mathbb{N}}\subseteq [q+\varepsilon,1]$ as in the proof of (i). By Assumption~\ref{ass}, we have that  $$\lim\limits_{n\rightarrow\infty}\left(\rho_{p_n}(X) -g(p_n)\right)\geq \rho_q(X).$$ Assume that equality holds. Then,  
        $$\lim\limits_{n\rightarrow\infty}\rho_{p_n}(X) - \lim_{n\rightarrow\infty}g(p_n)=\lim\limits_{n\rightarrow\infty}\left(\rho_{p_n}(X) -g(p_n)\right)=\rho_q(X).$$ 
        
        Note, Assumption~\ref{ass} gives us that $\adjustedRiskMeasure(X)\in\mathbb{R}$. Hence, it has to hold that $\lim\limits_{n\rightarrow\infty}\rho_{p_n}(X)\in\mathbb{R}$ and $\lim_{n\rightarrow\infty}g(p_n)\in\mathbb{R}$. This allows us to conclude that        $$0<\lim_{n\rightarrow\infty}g(p_n)=\rho_q(X)-\lim_{n\rightarrow\infty}\rho_{p_n}(X).$$
        So, $\lim\limits_{n\rightarrow\infty}\rho_{p_n}(X)<\rho_q(X)$, which is a contradiction to $\mathcal{P}$ being ordered, i.e.,~it holds that
        \begin{align*}
            \lim\limits_{n\rightarrow\infty}\left(\rho_{p_n}(X) -g(p_n)\right)>\rho_q(X) = \sup_{p\in [0,q]}\{\rho_p(X)\}.
        \end{align*}

        Further, note that
        $$\rho_q(X)=\sup_{p\in[0,q]}\{\rho_p(X)\}=\inf_{n\in\mathbb{N}}\{\adjustedRiskMeasureInfimal(X)\}.$$
        Now choose $n\in\mathbb{N}$ with $\adjustedRiskMeasureInfimal(X)<\adjustedRiskMeasure(X)$, then with the star-shapedness of $\rho_p$ we obtain:
        \begin{align*}
            n\adjustedRiskMeasure\left(\frac{X}{n}\right)&=n\sup_{p \in [0,1]}\left\{\rho_{p}\left(\frac{X}{n}\right)-g(p)\right\}\\
            &\leq n\frac{1}{n}\sup_{p \in [0,1]}\left\{\rho_{p}(X)-ng(p)\right\}\\
            &= \adjustedRiskMeasureInfimal(X)\\
            &<\adjustedRiskMeasure(X),
        \end{align*}
        which contradicts subadditivity.
        \end{itemize}
	\end{proof}

 \subsection{Finiteness and continuity}\label{sec:finitenessGeneral}
     
     Now we discuss conditions under which an adjusted risk measure is finite and continuous. Our results are tailor-made for the concrete risk measures that we introduce in the upcoming Sections~\ref{sec:def_new_risk_measures} and~\ref{case}. To guarantee finiteness, we restrict to special kinds of target risk profiles. The following example shows that a general result is not possible.

     \begin{example}
        Assume the adjusted ES with a target risk profile $g$ such that $g(1)<\infty$. Recall that $\infty-\infty$ is set to $-\infty$. Then, for $X\in L^1$ such that $\esssup X = \infty$, we obtain $\text{ES}^g(X) = \esssup X - g(1) = \infty$. Furthermore, $\text{ES}^g(X) = \infty$ is also possible even if $g(1)=\infty$. Let $X,Z\in L^1$ be independent and normal distributed with mean $0$ and standard deviations $\sigma_X,\sigma_Z>0$ such that $\sigma_X>\sigma_Y$. Then, set $g(p) = \text{ES}_p(Z)$. Then, it holds that 
        $$\sup_{p\in[0,1]}\{\text{ES}_p(X)-\text{ES}_p(Z)\}
        =\lim_{p\rightarrow 1}\left\{(\sigma_X-\sigma_Z)\frac{\varphi(\Phi^{-1}(p))}{1-p}\right\}=\infty,$$
        where $\varphi$, respectively $\Phi$, denotes the PDF, respectively the CDF, of a standard normally distributed random variable. 
     \end{example}

      Motivated by the previous example, we introduce the following assumption, which is crucial to obtain finiteness and continuity of an adjusted risk measure.
      \begin{assumption}\label{assump:finiteness}
          For $g\in\mathcal{G}_0$ it should hold that $g$ is lower semicontinuous and that 
          \begin{align*}
              0<p_1<p_2<1,
          \end{align*}
          where $p_1 = \max\{p\in[0,1]\,|\,g(p)=0\}$ and $p_2=\max\{p\in[0,1]\,|\,g(p)<\infty\}$.
      \end{assumption}

      Now we are in the position to state sufficient conditions to obtain finiteness. 
      \begin{lemma}\label{lem:finiteness}
          Let $\mathcal{X} = L^1$ and $g\in \mathcal{G}_0$ satisfies Assumption~\ref{assump:finiteness}. Let $p_1$ and $p_2$ be given as in Assumption~\ref{assump:finiteness}. Let $\Rho = \{\rho_p\}_{p\in[0,1]}$ be an ordered family of risk functionals such that  $\rho_p$ is finite-valued for all $p\in[p_1,p_2]$. Then, $\adjustedRiskMeasure$ is finite-valued. 
    \end{lemma}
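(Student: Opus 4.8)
The plan is to show that $\adjustedRiskMeasure(X)$ is sandwiched between two finite quantities for every $X\in L^1$, using the two landmarks $p_1$ and $p_2$ supplied by Assumption~\ref{assump:finiteness}. The guiding observation is purely order-theoretic: since $g\in\mathcal{G}_0$ is increasing with $g(0)=0$, we automatically have $g(p)\ge 0$ for all $p\in[0,1]$; moreover $g(p)=0$ on all of $[0,p_1]$, while $g(p)=\infty$ for $p>p_2$ and $g(p_2)<\infty$. These monotonicity facts, combined with the orderedness of $\Rho$ and the finiteness of $\rho_p$ on $[p_1,p_2]$, are exactly what the argument needs. (I would record at the outset that the lower semicontinuity of $g$ is used only to guarantee that the two maxima defining $p_1$ and $p_2$ are attained, so that the equalities $g(p_1)=0$ and the finiteness $g(p_2)<\infty$ are available; the proof then invokes these two facts directly.)

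For the lower bound I would evaluate the defining supremum at the single level $p=p_1$. Since $p_1\in[p_1,p_2]$, the risk functional $\rho_{p_1}$ is finite-valued, and $g(p_1)=0$, whence
$$\adjustedRiskMeasure(X)\;\ge\;\rho_{p_1}(X)-g(p_1)\;=\;\rho_{p_1}(X)\;>\;-\infty.$$

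For the upper bound I would split the range of the supremum at $p_2$. For $p>p_2$ we have $g(p)=\infty$, so under the convention $\infty-\infty=-\infty$ each summand $\rho_p(X)-g(p)$ equals $-\infty$ and can be discarded; the supremum is therefore effectively taken over $p\in[0,p_2]$. On that interval the nonnegativity $g(p)\ge 0$ gives $\rho_p(X)-g(p)\le\rho_p(X)$, and orderedness together with $p\le p_2$ gives $\rho_p(X)\le\rho_{p_2}(X)$. Since $p_2\in[p_1,p_2]$, the value $\rho_{p_2}(X)$ is finite, so
$$\adjustedRiskMeasure(X)\;=\;\sup_{p\in[0,p_2]}\{\rho_p(X)-g(p)\}\;\le\;\rho_{p_2}(X)\;<\;\infty.$$
Combining the two estimates yields $\adjustedRiskMeasure(X)\in[\rho_{p_1}(X),\rho_{p_2}(X)]\subseteq\mathbb{R}$, which is the claim.

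I do not anticipate a serious obstacle here: the whole proof is a sandwiching between the finite values $\rho_{p_1}(X)$ and $\rho_{p_2}(X)$. The only points deserving care are the clean treatment of the levels $p>p_2$, where one must invoke the $\infty-\infty=-\infty$ convention to drop them, and the verification that $g\ge 0$ follows from $g\in\mathcal{G}_0$ alone (which it does, as $g$ is increasing with $g(0)=0$). It is worth noting that the strict inequalities $p_1>0$ and $p_2<1$ are not themselves invoked for finiteness; the condition $p_2<1$ serves mainly to make the hypothesis ``$\rho_p$ finite on $[p_1,p_2]$'' realistic on $L^1$, where $\rho_1$ is typically infinite, and both strict inequalities become genuinely relevant only for the subsequent continuity statements.
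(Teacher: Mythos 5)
Your proof is correct and is exactly the paper's argument: the paper's entire proof is the one-line sandwich $-\infty<\rho_{p_1}(X)\leq \adjustedRiskMeasure(X)\leq \rho_{p_2}(X)<\infty$, which you have simply spelled out (evaluating at $p_1$ where $g(p_1)=0$ for the lower bound, and using $g\geq 0$, the $\infty-\infty=-\infty$ convention for $p>p_2$, and orderedness for the upper bound). No differences worth noting.
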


    \begin{proof}
        For $X\in L^1$ it holds that $-\infty<\rho_{p_1}(X)\leq 
            \adjustedRiskMeasure(X)\leq \rho_{p_2}(X)<\infty$. 
    \end{proof}
    
    To obtain continuity we use the next result, in which part (b) is a version of~\cite[Corollary 3.14]{Farkas}. To do so, for a subset $\mathcal{A}$ of $L^1$, we denote by $\interior(\mathcal{A})$ the interior of $\mathcal{A}$ with respect to the $L^1$-norm.
    \begin{proposition}\label{prop:continuity}
        Let $\mathcal{X} = L^1$ and $g\in \mathcal{G}_0$ satisfies Assumption~\ref{assump:finiteness}. Let $p_1$ and $p_2$ be given as in Assumption~\ref{assump:finiteness}. Let $\Rho = \{\rho_p\}_{p\in[0,1]}$ be a family of convex risk functionals. Then, the following statements hold:
        \begin{enumerate}
            \item[(a)] If $[p_1,p_2]\rightarrow[-\infty,\infty],p\mapsto \rho_p(X)$ is upper semicontinuous for all $X\in L^1$ and \begin{align}\label{eq:condition_continuity}
            \bigcap\limits_{p\in[p_1,p_2]}\interior(\{X\in L^1\,|\,\rho_p(X)\leq 0\})\neq \emptyset,
        \end{align}
        then $\interior({\mathcal{A}_{\Rho,g}})\neq \emptyset$.
            \item[(b)] If $\interior({\mathcal{A}_{\Rho,g}})\neq \emptyset$, then $\adjustedRiskMeasure$ is finite-valued and continuous.
        \end{enumerate}
    \end{proposition}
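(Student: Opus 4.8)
The plan is to prove (a) by exhibiting an explicit interior point of $\mathcal{A}_{\Rho,g}$, and to obtain (b) by recognizing $\adjustedRiskMeasure$ as a convex risk functional and invoking the cited result. For (a), I would first unwind the acceptance set. Because $g(p)=\infty$ for $p>p_2$, the constraint $\rho_p(X)\le g(p)$ is void there, so $\mathcal{A}_{\Rho,g}=\bigcap_{p\in[0,p_2]}\{X\in L^1\mid\rho_p(X)\le g(p)\}$. Since $g(p)\ge 0$ throughout $[0,p_2]$, one has $\{\rho_p\le 0\}\subseteq\{\rho_p\le g(p)\}$, so it suffices to place an open ball inside $\bigcap_{p\in[0,p_2]}\{\rho_p\le 0\}$. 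Let $X_0$ be a point of the intersection in \eqref{eq:condition_continuity}. Adding a sufficiently small positive constant to $X_0$ leaves it in each of these sublevel sets, so cash additivity gives $\rho_p(X_0)<0$ for every $p\in[p_1,p_2]$; as $p\mapsto\rho_p(X_0)$ is upper semicontinuous on the compact interval $[p_1,p_2]$, its maximum is attained and hence strictly negative.

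The crux — and the step I expect to be the main obstacle — is to turn the family of pointwise interiors into a single radius valid simultaneously for all $p\in[p_1,p_2]$, i.e.\ to show $X_0\in\interior(\{\Phi\le 0\})$ for the convex functional $\Phi(Z):=\sup_{p\in[p_1,p_2]}\rho_p(Z)$. The plan is to exploit the upper semicontinuity in $p$ together with compactness: cover $[p_1,p_2]$ by finitely many parameter-neighborhoods on each of which $\rho_p(X_0)$ stays below a fixed negative level, and on each piece use that $X_0$ is interior to the sublevel set of the corresponding $\rho_{p^\ast}$ — so that this convex functional is continuous at $X_0$ — to transfer a uniform amount of slack to nearby indices. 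The difficulty is exactly the interplay between semicontinuity in the index $p$ and local boundedness (continuity) of each convex $\rho_p$ near $X_0$: without uniform control the admissible radius could collapse to zero along a sequence $p_n\to p^\ast$, and ruling this out is the heart of the argument. Once a uniform $\delta>0$ with $B(X_0,\delta)\subseteq\bigcap_{p\in[p_1,p_2]}\{\rho_p\le 0\}$ is in hand, the remaining constraints are easy: those for $p>p_2$ are void, and those for $p\in[0,p_1)$ (where $g=0$) are absorbed by replacing $X_0$ with $X_0-m$ for a constant $m\ge\sup_{p\in[0,p_1)}\rho_p(X_0)$ via cash additivity. The downward shift is an $L^1$-isometry, so it preserves the ball of radius $\delta$, yielding $B(X_0-m,\delta)\subseteq\mathcal{A}_{\Rho,g}$. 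Here one needs $\sup_{p\in[0,p_1)}\rho_p(X_0)<\infty$; this is immediate when the family is ordered, since then $\rho_p(X_0)\le\rho_{p_1}(X_0)$ for $p\le p_1$, which covers all the concrete families used in the paper.

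For (b), I would first record that $\adjustedRiskMeasure$ is a convex risk functional: each $X\mapsto\rho_p(X)-g(p)$ is convex (a convex $\rho_p$ shifted by the constant $g(p)$), and the pointwise supremum of convex maps is convex, while monotonicity and cash additivity are already contained in Proposition~\ref{prop:adjustedRiskMeasure}. The hypothesis $\interior(\mathcal{A}_{\Rho,g})\neq\emptyset$ then says precisely that $\adjustedRiskMeasure$ is bounded above, by $0$, on a nonempty open ball. A convex, monotone, cash-additive functional on $L^1$ that is bounded above on an open set is automatically real-valued and continuous on all of $L^1$; this is the content of the extended Namioka--Klee type statement behind \cite[Corollary 3.14]{Farkas}. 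Invoking it gives the finiteness and continuity of $\adjustedRiskMeasure$ and completes the proof.
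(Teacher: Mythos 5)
Your part (b) is correct and coincides with the paper's argument: $\adjustedRiskMeasure$ is convex as a pointwise supremum of the convex maps $X\mapsto\rho_p(X)-g(p)$, and a convex monetary risk functional that is bounded above on a nonempty open subset of $L^1$ is finite-valued and continuous by the extended Namioka--Klee type result behind \cite[Corollary 3.14]{Farkas}.

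In part (a), the step you yourself flag as the crux is a genuine gap, and the covering argument you sketch does not close it. Knowing $X_0\in\interior(\{\rho_{p^\ast}\leq 0\})$ gives a radius $\delta_{p^\ast}$ on which $\rho_{p^\ast}\leq 0$, but the upper semicontinuity of $p\mapsto\rho_p(Z)$ is a statement for each fixed $Z$: for $p$ near $p^\ast$ it controls $\rho_p(X_0)$, not $\rho_p(Z)$ uniformly over $Z\in B(X_0,\delta_{p^\ast})$. So the admissible radius can indeed collapse along a sequence $p_n\to p^\ast$, and nothing in your sketch rules this out. The paper avoids the ball construction entirely: each $\rho_p$, $p\in[p_1,p_2]$, is continuous by \cite[Corollary 3.14]{Farkas} (convexity plus condition~(\ref{eq:condition_continuity})), hence lower semicontinuous, so the supremum $\adjustedRiskMeasure$ is lower semicontinuous by \cite[Lemma 2.41]{aliprantis_infinite_2006}; for a lower semicontinuous monetary risk measure one then has the abstract identity $\interior(\mathcal{A}_\rho)=\{\rho<0\}$ from \cite[Lemma 2.5 and Remark 2.6]{Farkas}. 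Applying this identity once to $\adjustedRiskMeasure$ and once to each continuous $\rho_p-g(p)$, and using upper semicontinuity in $p$ on the compact interval $[p_1,p_2]$ only to pass from $\sup_{p}\{\rho_p(X)-g(p)\}<0$ to ``$\rho_p(X)-g(p)<0$ for all $p$'', yields $\interior(\mathcal{A}_{\Rho,g})=\bigcap_{p\in[p_1,p_2]}\interior(\{\rho_p-g(p)\leq 0\})\supseteq\bigcap_{p\in[p_1,p_2]}\interior(\{\rho_p\leq 0\})\neq\emptyset$. That identity is exactly the uniformization device your proof is missing; without it (or an equivalent substitute) your argument for (a) is incomplete. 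Your separate observation that the constraints for $p<p_1$ require extra care (e.g.\ orderedness of the family) is a fair one, which the paper's own proof also glosses over when it silently restricts the supremum to $[p_1,p_2]$.
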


    \begin{proof} 
        We first prove (a). To do so, note that for every $p\in[p_1,p_2]$, we can apply~\cite[Corollary 3.14]{Farkas} to obtain that the convexity of $\rho_p$ together with~(\ref{eq:condition_continuity}) imply that $\rho_p$ is continuous. Then, by the fact that the supremum of a family of lower semicontinuous functions is lower semicontinuous, see~\cite[Lemma 2.41]{aliprantis_infinite_2006}, we get that $\adjustedRiskMeasure$ is lower semicontinuous. Hence, by~\cite[Lemma 2.5 and Remark 2.6]{Farkas} we obtain that
        \begin{align*}
            \interior({\mathcal{A}_{\Rho,g}})=\left\{X\in L^1\,\middle|\, \sup_{p\in[p_1,p_2]}\{\rho_p(X)- g(p)\}<0\right\}.
        \end{align*}
        The upper semicontinuity of the map $[p_1,p_2]\rightarrow[-\infty,\infty],p\mapsto \rho_p(X)-g(p)$ for all $X\in L^1$ gives us the following:
        \begin{align*}
            \interior({\mathcal{A}_{\Rho,g}})&=\left\{X\in L^1\,\middle|\,\forall p\in [p_1,p_2]:\rho_p(X)-g(p)<0\right\}\\
            &=\bigcap_{p\in[p_1,p_2]}\{X\in L^1\,|\,\rho_p(X)-g(p)<0\}\\
            &= \bigcap_{p\in[p_1,p_2]}\interior(\{X\in L^1\,|\,\rho_p(X)-g(p)\leq 0\}).
        \end{align*}
        Together with~(\ref{eq:condition_continuity}) we obtain that $\interior(\mathcal{A}_{\Rho,g})\neq\emptyset$. 
        
        Note that $\adjustedRiskMeasure$ is convex (compare with the discussion at the end of Section~\ref{sec:def_adjustedRiskMeasure}). Then, the non-empty interior of the acceptance set $\mathcal{A}_{\Rho,g}$ and the convexity of $\adjustedRiskMeasure$ implies by~\cite[Corollary 3.14]{Farkas} that $\adjustedRiskMeasure$ is finite-valued and continuous, which also shows part (b).
    \end{proof}

    \begin{remark}
        The reason to assume~(\ref{eq:condition_continuity}) in Proposition~\ref{prop:continuity} is twofold. First, it guarantees that every $\rho_p$ is finite-valued and continuous. Second, it ensures that the interiors of the acceptance sets of the risk measures from the family $\{\rho_p\}_{p\in[0,1]}$ have one element in common and the proof shows that this is sufficient to obtain a non-empty interior of the acceptance set of $\adjustedRiskMeasure$. The latter allows for the application of~\cite[Corollary 3.14]{Farkas}.
    \end{remark}

    \begin{example}
        We check the conditions of Proposition~\ref{prop:continuity} in case of the adjusted ES for $g\in\mathcal{G}_0$ already satisfying Assumption~\ref{assump:finiteness}. For all $X\in L^1$, the function $[p_1,p_2]\rightarrow\mathbb{R},p\mapsto \text{ES}_p(X)$ is upper semicontinuous, indeed even continuous, see e.g.,~the proof of~\cite[Proposition 3.2]{AES}. Further, it is well-known that the Expected Shortfall is convex. By cash-additivity of the Expected Shortfall, we obtain for every $p\in(0,1)$ that $\text{ES}_p(-1) = -1 <0$, which means that $-1$ is an element of the interior of the ES acceptance set at level $p$. Hence, all conditions of Proposition~\ref{prop:continuity} are satisfied and therefore, the adjusted ES is finite-valued and continuous.
    \end{example}
	
	\section{Examples of adjusted risk measures}\label{sec:def_new_risk_measures}

    Now, we introduce concrete adjusted risk measures. This allows us to compare their performances to the adjusted ES in the case study in Section~\ref{case} later on. Also, we discuss algebraic properties of these new risk measures and write down dual representations for them.

    From now on, we always restrict ourselves to target risk profiles from the set $\mathcal{G}_0$.
    
    \subsection{Definitions}
 
    As a prerequisite, we introduce two known monetary risk measures, namely Range-Value-at-Risk (RVaR), see~\cite[Definition 3.2.3]{Robust}, and expectiles, see~\cite[Equation (5)]{expectiles}.
	
	\begin{definition}(RVaR and expectiles)\label{def RM}
        Let $X\in L^1$. Then for $0\leq\alpha_1<\alpha_2\leq1$, we define the RVaR at levels $\alpha_1$ and $\alpha_2$ of $X$ by
			$$\text{RVaR}_{\alpha_1,\alpha_2}(X) \coloneqq \frac{1}{\alpha_2-\alpha_1}\int_{\alpha_1}^{\alpha_2}\text{VaR}_u(X)\,\mathrm{d}u.$$
        Further, we define $\text{RVaR}_{\alpha,\alpha}(X) := \VaR_{\alpha}(X)$ for all $\alpha\in[0,1]$.

        The expectile at level $q\in(0,1)$ of $X$ is denoted by $e_q(X)$ and defined as the solution of the equation
			$$qE[\max\{X-e_q(X),0\}]=(1-q)E[-\min\{X-e_q(X),0\}].$$
        Further, we set $e_0(X) = \essinf X$ and $e_1(X) = \esssup X$.
	\end{definition}

	\begin{remark}
            For further details and a recent overview on RVaR, we refer to \cite{RVaR2,elictable,Robust}. For expectiles we refer to~\cite{1234,expectiles}.
	\end{remark}
 
	The first adjusted risk measure that we construct follows the idea to use the VaR up to a certain level $r\in(0,1)$ and ES otherwise. This idea comes from the following observation: By using the adjusted ES in combination with a step function as target risk profile, we have to calculate a maximum of affine transformations of ES values. But, all these ES values are based on the tail of the underlying distribution. So, the whole tail is evaluated multiple times. In our case, in combination with step functions, we can force that only one ES is used and VaRs otherwise.
	
	\begin{definition}(Simplified composed risk measure)\label{def SCRM}
		Let $X\in L^1$ and $g\in\mathcal{G}_0$ and $\Rho = \{\rho_p\}_{p\in[0,1]}$ defined for some $r\in (0,1)$ as follows: $$\rho_p(X)= \text{VaR}_p(X)\mathds{1}_{[0,r]}(p) + \text{ES}_p(X)\mathds{1}_{(r,1]}(p)$$
		  In this case, we call $\adjustedRiskMeasure$ a \textbf{\textit{simplified composed risk measure (SCRM)}} and for clarification, we also write $\scrm = \adjustedRiskMeasure$.
	\end{definition}

	Next, we follow a similar approach as for the SCRM, but we replace the VaR with the RVaR. This approach allows to consider the whole tail, but we avoid that the involved risk measures are based on overlapping regions of the tail of the distribution.
	
	\begin{definition}(Composed risk measure)\label{CRM}
		Let $X\in L^1$ and $g\in\mathcal{G}_0$. For a set of levels $\mathcal{L} = \{p_k\}_{k\in\{0,1,\dots,n,n+1\}}\subset[0,1]$ with $n\in\mathbb{N}$ and $0=p_0<p_1<\dots<p_{n}<p_{n+1}=1$, let $\Rho=\{\rho_p\}_{p\in[0,1]}$ be given by
        $$\rho_p(X)=\text{RVaR}_{p,p_1}(X)\mathds{1}_{[0,p_1]}(p)+\sum_{i=2}^{n}\text{RVaR}_{p,p_i}(X)\mathds{1}_{(p_{i-1},p_i]}(p)+\text{ES}_p(X)\mathds{1}_{(p_{n},1]}(p).$$
        In this case, we call $\adjustedRiskMeasure$ a \textbf{\textit{composed risk measure (CRM)}} and for clarity, we also write $\crm = \adjustedRiskMeasure$.
	\end{definition}

     \begin{remark}
         Note that $\RVaR_{0,0}(X) = \VaR_{0}(X)$ is not possible and therefore, the family of risk functionals in the previous definition is a family of risk measures.    
     \end{remark}
	
	If we use the CRM together with a step function as in Example~\ref{ex123}, then the CRM reduces to a maximum over VaR and ES values. This effect can be avoided if we only work with a finite number of RVaRs and fix the first level. This is done in the following definition.
 
    \begin{definition}(Fixed composed risk measure)\label{FCRM}
		Based on the same setup as in Definition~\ref{CRM}, let $\Rho=\{\rho_p\}_{p\in[0,1]}$ be given by
        $$\rho_p(X)=\text{RVaR}_{0,p_1}(X)\mathds{1}_{[0,p_1]}(p)+\sum_{i=2}^{n}\text{RVaR}_{p_{i-1},p_i}(X)\mathds{1}_{(p_{i-1},p_i]}(p)+\text{ES}_{p}(X)\mathds{1}_{(p_{n},1]}(p).$$
        In this case, we call $\adjustedRiskMeasure$ a \textbf{\textit{fixed composed risk measure (FCRM)}} and for clarity, we also write $\fcrm = \adjustedRiskMeasure$.
	\end{definition}
 
    \begin{remark}
        In this section, we focus on the (S)CRM. In the case study in the next section, we then also apply the FCRM. 
    \end{remark}	
	
	Next, we consider a construction that differs completely from the previous constructions, by using a family of expectiles.
	
	\begin{definition}(Adjusted expectile risk measure)\label{AERM}
		Let $\mathcal{X}=L^1$, $g\in\mathcal{G}_0$ and $\Rho = \{e_p\}_{p\in[0,1]}$. We call $\adjustedRiskMeasure$ an \textbf{\textit{adjusted expectile risk measure (AERM)}} and write $\aerm=\adjustedRiskMeasure$.
	\end{definition}
	
	To illustrate the new adjusted risk measures, we use step functions as target risk profiles:
	
	\begin{example}\label{example}
		Let $X\in L^1$ be arbitrary. Then, for the step function $g$ from Example~\ref{ex123}, the following representations hold:
		\begin{enumerate}[(i)]
			\item SCRM: $\scrm(X) =\max\left\{\max\limits_{i\in\{1,...,j\}}\{\text{VaR}_{p_i}(X)-r_i\},\max\limits_{i\in\{j+1,...,n\}}\{\text{ES}_{p_i}(X)-r_i\}\right\},$ where $j\in\{1,\dots,n\}$ such that $p_j\leq r<p_{j+1}$. 
			\item CRM and FCRM in case in which the levels of $g$ and $\Rho$ are the same:
				\begin{align*}
				    \crm(X)&=\max\left\{\max_{i\in\{1,...,n-1\}}\{\text{VaR}_{p_{i}}(X)-r_i\},\text{ES}_{p_n}(X)-r_n\right\},\\
                    \fcrm(X) &= \max\left\{\max_{i\in\{1,...,n-1\}}\{\text{RVaR}_{p_{i-1},p_{i}}(X)-r_i\},\text{ES}_{p_n}(X)-r_n\right\}.
				\end{align*}
			\item AERM: $\aerm(X) = \max\limits_{i\in\{1,...,n\}}\left\{e_{p_{i}}(X)-r_i\right\}.$
		\end{enumerate}
        The CRM and the SCRM are equal, if $r=p_{n}$. The previous representations can be compared with the adjusted ES by using the same step function $g$. In this case, it holds that $$\ES^{g}(X) = \max\limits_{i\in\{1,...,n\}}\{\text{ES}_{p_{i}}(X)-r_i\}.$$
	\end{example}

    \subsection{Standard properties}\label{Standart properties}
    
    Now, we discuss properties of our new risk measures. To do so, we first mention that the RVaR is a positive homogeneous, star-shaped and law-invariant risk measure, see~\cite[Proposition 1]{RVaR}. Expectiles are star-shaped and law-invariant. For levels $q\geq \frac{1}{2}$, expectiles are consistent with SSD, convex and therefore coherent, see e.g.,~\cite{1234,expectiles}.
	
	First, we discuss if the new risk measures satisfy Assumption~\ref{ass}. To do so, one can easily check that the conditions of Proposition~\ref{propalt} are satisfied in the case of SCRM, CRM and FCRM for a continuously and essentially bounded random variable $X$. Second, note that SCRM, CRM, FCRM and AERM are all normalized, star-shaped and law-invariant. Further properties are summarized in Table~\ref{table1}. For the sake of brevity, additional explanations regarding the properties in this table are shifted to Appendix~\ref{sec:explanationsTable}.
	
	\begin{table}[H]
		$$\begin{tabular}{|c||c|c|c|c|c|c|c|c|c|}
			\hline
			\textbf{Properties}&VaR&RVaR&ES&EP&adj. ES&LVaR&SCRM&CRM&AERM\\
			\hline
			\hline
			convex&X&X&\checkmark&\checkmark&\checkmark&X&X&X&X \\
			\hline
			positive homogeneous&\checkmark&\checkmark&\checkmark&X&X&X&X&X&X\\
			\hline					
			subadditive&X&X&\checkmark&X&X&X&X&X&X\\
			\hline
			consistent with SSD&X&X&\checkmark&\checkmark&\checkmark&X&X&X&X\\
			\hline
			surplus invariant&\checkmark&X&X&X&X&\checkmark&X&X&X\\
			\hline
		\end{tabular}$$
		\caption{\footnotesize{Properties of risk measures:
			is fulfilled $= \checkmark$, is not fulfilled = X, EP =  expectile, adj. ES = adjusted ES.}}
        \label{table1}
	\end{table} 

    We see that neither SCRM nor CRM satisfy any of the properties in Table~\ref{table1}. The intuition behind this is as follows: both can be seen as a mixture between the adjusted ES and the LVaR, which do not have property in common. The AERM does not satisfy a single property in Table~\ref{table1} in general. But, note that if the target risk profile is zero for all level smaller than $\frac{1}{2}$, then we obtain convexity and consistency with SSD, i.e.,~the same properties as for the adjusted ES. 

    \subsection{Finiteness and continuity}\label{sub}
    
    Now, we use the results from Section~\ref{sec:finitenessGeneral} to discuss finiteness and continuity for the new risk measures. To do so, we assume that the target risk profile $g$ satisfies Assumption~\ref{assump:finiteness}. Here, we set 
    $$a= \max\{p\in[0,1]\,|\,g(p)=0\}\,\, \text{and}\,\, b=\max\{p\in[0,1]\,|\,g(p)<\infty\}$$. 

    \textit{SCRM and CRM:} For $X\in L^1$ we obtain for the SCRM that $-\infty<\text{VaR}_{a}(X)\leq \rho_{r,g}^{\text{SC}}(X)\leq \text{ES}_{b}(X)<\infty.$
    In general, continuity cannot hold for the SCRM and CRM, because they can simplify to a single VaR, which is not continuous according to ~\cite[Proposition 4.2]{Farkas}.

    \textit{FCRM:} The FCRM is finite-valued and continuous. This follows by the fact that the FCRM is the maximum of a finite number of continuous functions, namely the maximum of a finite number of (affine transformations of) RVaRs and (if $p_n<b$) the map $X\mapsto \sup_{p\in(p_n,1]}\{\text{ES}_p(X)-g(p)\}$. Indeed, every RVaR is of the form $\text{RVaR}_{p,q}(X)=\frac{\text{ES}_p(X)-\text{ES}_q(X)}{q-p}$ for $X\in L^1$ according to~\cite{https://doi.org/10.1111/mafi.12270}[Example 4.6]. Hence, the continuity of the RVaR follows directly from the continuity of the ES. The continuity of the map $X\mapsto \sup_{p\in(p_n,1]}\{\text{ES}_p(X)-g(p)\}$ is obtained as follows: Set the function $h$ as $h(p)=g(p)$ for all $p\in(p_n,b]$ and $h(p_n)=\lim_{p\downarrow p_n}g(p)$.  Then it holds that
    $$\sup_{p\in(p_n,1]}\{\text{ES}_p(X)-g(p)\}=\sup_{p\in[p_n,b]}\{\text{ES}_p(X)-h(p)\}.$$ 
    Note, $h$ is lower semicontinuous. Together with the upper semicontinuity of the Expected Shortfall we obtain from Proposition~\ref{prop:continuity} the finiteness and the continuity of the FCRM.

    \textit{AERM:} Expectiles are convex for levels $q\geq\frac{1}{2}$, see e.g.,~\cite[Proposition 1]{BELLINI201441}. Hence, if we set $a=\frac{1}{2}$ then the AERM is convex. Note that every expectile is (lower semi)continuous by~\cite[Theorem 10]{BELLINI201441}\footnote{Note that convergence with respect to the $L^1$-norm implies convergence in the Wasserstein distance, see e.g.,~\cite[Section 3.2]{BELLINI201441}.}. Hence, the AERM is lower semicontinuous, as the supremum of a family of lower semicontinuous functions (\cite[Lemma 2.41]{aliprantis_infinite_2006}). Then, by~\cite[Lemma 2.5 and Remark 2.6]{Farkas}, we obtain that
    \begin{align*}
        \interior(\mathcal{A}_{\mathcal{P},g}) = \{X\in L^1\,|\,\adjustedRiskMeasure(X)<0\}.
    \end{align*}
    Now, the cash additivity of expectiles implies that 
    $\sup_{p\in[p_1,p_2]}\{e_p(-1)- g(p)\}=-1<0,$ which shows that $-1\in\interior(\mathcal{A}_{\mathcal{P},g})$. Hence, all conditions of Proposition~\ref{prop:continuity} (b) are satisfied, which implies that the AERM is finite-valued and continuous. 

    Concluding, we obtain that the SCRM, the (F)CRM and the AERM are finite-valued if the target risk profile $g$ satisfies Assumption~\ref{assump:finiteness}. Furthermore, for such target risk profiles, the FCRM and the AERM are continuous.

    \subsection{Dual representations}\label{sec:dual}

    This  subsection is motivated by~\cite[Proposition 3.7]{AES}, in which a dual representation for the adjusted ES is given. We provide analogous representations for SCRM, CRM and AERM. 
    
    We start with the case of AERM. For this, given a set $A$, we denote by  $\delta_A$ the indicator function for which we have that $\delta_A(x)=0$ if $x\in A$ and $\delta_A(x) = \infty$ otherwise. For the next result we use the conventions that $\frac{1}{0}\,\,\text{is set to}\,\,\infty$ and $\frac{1}{\infty}=0$.
    
    \begin{proposition}\label{prop:dual_expectiles}
        Let $X\in L^1$ and $g\in\mathcal{G}_0$ with $g(1)=\infty$. Then, it holds that 
        \begin{align*}
            \aerm(X) = \max\left\{\sup_{p\in \left[0,\frac{1}{2}\right]}\left\{ \inf_{\mathbb{Q}\in \mathcal{P}^\infty_{\mathbb{P}}}\{E_{\mathbb{Q}}[X]-g(p)+\delta_{[0,c(\mathbb{Q})]}(p)\}\right\},\sup_{\mathbb{Q}\in\mathcal{P}^\infty_{\mathbb{P}}}\{E_{\mathbb{Q}}[X]-g(1-c(\mathbb{Q}))\}\right\},
        \end{align*}
        where $\mathcal{P}^\infty_{\mathbb{P}} = \left\{\mathbb{Q}\in \mathcal{P}\,\middle|\,\mathbb{Q} \ll \mathbb{P},\frac{\mathrm{d}\mathbb{Q}}{\mathrm{d}\mathbb{P}}\in L^\infty,\frac{\mathrm{d}\mathbb{Q}}{\mathrm{d}\mathbb{P}}>0 \,\,\text{a.s.}\right\}$ and $c(\mathbb{Q})=\frac{\essinf\left(\frac{\mathrm{d}\mathbb{Q}}{\mathrm{d}\mathbb{P}}\right)}{\essinf\left(\frac{\mathrm{d}\mathbb{Q}}{\mathrm{d}\mathbb{P}}\right)\,+\,\esssup\left(\frac{\mathrm{d}\mathbb{Q}}{\mathrm{d}\mathbb{P}}\right)}$.
    \end{proposition}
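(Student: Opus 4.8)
The plan is to derive the claim from the dual representation of expectiles, reorganising the supremum defining $\aerm(X)=\sup_{p\in[0,1]}\{e_p(X)-g(p)\}$ according to whether $p\le\tfrac12$ or $p>\tfrac12$. The single external input I would invoke (e.g.\ \cite{BELLINI201441}) is that for $q\ge\tfrac12$ the expectile is coherent with
\[
e_q(X)=\sup_{\mathbb{Q}\in\mathcal{M}_q}E_{\mathbb{Q}}[X],\qquad \mathcal{M}_q=\bigl\{\mathbb{Q}\in\mathcal{P}^\infty_{\mathbb{P}}\,\big|\,c(\mathbb{Q})\ge 1-q\bigr\}.
\]
I would first check that the standard density-ratio bound $\esssup(\mathrm{d}\mathbb{Q}/\mathrm{d}\mathbb{P})/\essinf(\mathrm{d}\mathbb{Q}/\mathrm{d}\mathbb{P})\le q/(1-q)$ is, after dividing numerator and denominator by $\essinf(\mathrm{d}\mathbb{Q}/\mathrm{d}\mathbb{P})+\esssup(\mathrm{d}\mathbb{Q}/\mathrm{d}\mathbb{P})$, exactly the condition $c(\mathbb{Q})\ge 1-q$. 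For $q\le\tfrac12$ I would then use the expectile symmetry $e_q(X)=-e_{1-q}(-X)$ together with $1-q\ge\tfrac12$ to obtain the infimal form $e_q(X)=\inf_{\mathbb{Q}:\,c(\mathbb{Q})\ge q}E_{\mathbb{Q}}[X]$.

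Next I would split the defining supremum. Since $g(1)=\infty$, the convention $\infty-\infty=-\infty$ removes the level $p=1$, so $\aerm(X)$ equals the maximum of $\sup_{p\in[0,1/2]}\{e_p(X)-g(p)\}$ and $\sup_{p\in(1/2,1)}\{e_p(X)-g(p)\}$. In the first block I substitute the infimal representation, turning the constraint $c(\mathbb{Q})\ge p$ into the penalty $\delta_{[0,c(\mathbb{Q})]}(p)$, which is $0$ exactly when $p\le c(\mathbb{Q})$ and $\infty$ otherwise; this reproduces the first term of the statement verbatim. In the second block I substitute the sup-representation and interchange the two suprema (legitimate, since $\sup$ commutes with $\sup$), obtaining $\sup_{\mathbb{Q}}\sup_{p\in(1/2,1),\,p\ge 1-c(\mathbb{Q})}\{E_{\mathbb{Q}}[X]-g(p)\}$. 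As $E_{\mathbb{Q}}[X]$ does not depend on $p$ and $g$ is increasing, the inner supremum is attained at the left endpoint $p=1-c(\mathbb{Q})$, collapsing the block to $\sup_{\mathbb{Q}}\{E_{\mathbb{Q}}[X]-g(1-c(\mathbb{Q}))\}$, which is the second term.

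Three points need care. First, I must pin down the dual representation over exactly $\mathcal{P}^\infty_{\mathbb{P}}$ (bounded, a.s.\ strictly positive densities) and as a genuine supremum rather than a maximum; this, together with the algebraic translation above, is the main obstacle, since the remaining manipulations are only reindexing of suprema. Second, the evaluation of the inner $p$-supremum needs no continuity of $g$: because the endpoint $1-c(\mathbb{Q})$ lies in the admissible set, monotonicity alone gives $\inf_{p\ge 1-c(\mathbb{Q})}g(p)=g(1-c(\mathbb{Q}))$. Third, the degenerate measures with $\essinf(\mathrm{d}\mathbb{Q}/\mathrm{d}\mathbb{P})=0$, i.e.\ $c(\mathbb{Q})=0$, make the second-block constraint set empty and thus contribute $-\infty$; this matches $E_{\mathbb{Q}}[X]-g(1-c(\mathbb{Q}))=E_{\mathbb{Q}}[X]-g(1)=-\infty$ precisely because $g(1)=\infty$, which is exactly where that hypothesis is used. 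Finally, I would check that the boundary $\mathbb{Q}=\mathbb{P}$ (with $c(\mathbb{P})=\tfrac12$) is handled consistently: the value $E_{\mathbb{P}}[X]-g(\tfrac12)$ is already supplied by the first block at $p=\tfrac12$, so its appearance in the second term via $g(1-c(\mathbb{P}))=g(\tfrac12)$ does not affect the outer maximum.
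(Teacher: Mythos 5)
Your proposal is correct and follows essentially the same route as the paper's proof: both start from the Bellini--Di Bernardino dual representation of expectiles, translate the density-ratio constraint into the sets $[0,c(\mathbb{Q})]$ and $[1-c(\mathbb{Q}),1]$, split the supremum at $p=\tfrac12$, interchange the two suprema in the upper block, and use monotonicity of $g$ to collapse the inner supremum to $g(1-c(\mathbb{Q}))$. The only cosmetic difference is that you derive the infimal form for $p\le\tfrac12$ via the symmetry $e_q(X)=-e_{1-q}(-X)$, whereas the paper cites both halves of the dual representation directly (and treats $q=0$ by a separate reference to the essential infimum).
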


    \begin{proof}
        By the assumptions on $g$, it holds that $\aerm(X)=\sup_{p\in[0,1)}\{e_p(X)-g(p)\}$. Then, \cite[Proposition 8]{BELLINI201441} gives us the following dual representation for an expectile:
        \begin{align*}
            e_q(X)=\min_{\mathbb{Q}\in \mathcal{M}_q}\{E_{\mathbb{Q}}[X]\} \mathds{1}_{\left[0,\frac{1}{2}\right)}(q)+ \max_{\mathbb{Q}\in \mathcal{M}_q}\{E_{\mathbb{Q}}[X]\} \mathds{1}_{\left[\frac{1}{2},1\right)}(q),
        \end{align*}
        where $\mathcal{M}_q = \left\{ \mathbb{Q}\in \mathcal{P}^\infty_{\mathbb{P}}\,\middle|\, \frac{\esssup\left(\frac{\mathrm{d}\mathbb{Q}}{\mathrm{d}\mathbb{P}}\right)}{\essinf\left(\frac{\mathrm{d}\mathbb{Q}}{\mathrm{d}\mathbb{P}}\right)}\leq \max\left\{\frac{q}{1-q},\frac{1-q}{q}\right\}\right\}\,\text{and}\,\,q\in(0,1)$. For $q=0$ it holds that $$e_0(X) =\essinf(X) =\inf_{\mathbb{Q}\in\mathcal{P}_{\mathbb{P}}^{\infty}}E_{\mathbb{Q}}[X],$$
        see e.g.,~\cite[Example 4.39]{follmer_stochastic_2016}. For brevity, we set $ \mathcal{M}_{0}=\mathcal{P}_{\mathbb{P}}^{\infty}$. For $\mathbb{Q}\in\mathcal{P}^\infty_{\mathbb{P}}$, we use in the following the sets $D_1(\mathbb{Q})$ and $D_2(\mathbb{Q})$ given by
        $$D_1(\mathbb{Q}) = \left\{p\in \left[0,\frac{1}{2}\right]\,\middle|\,\frac{\esssup\left(\frac{\mathrm{d}\mathbb{Q}}{\mathrm{d}\mathbb{P}}\right)}{\essinf\left(\frac{\mathrm{d}\mathbb{Q}}{\mathrm{d}\mathbb{P}}\right)}\leq \frac{1-p}{p}\right\}=[0,c(\mathbb{Q})]$$
        and
        $$D_2(\mathbb{Q}) = \left\{p\in \left[\frac{1}{2},1\right]\,\middle|\,\frac{\esssup\left(\frac{\mathrm{d}\mathbb{Q}}{\mathrm{d}\mathbb{P}}\right)}{\essinf\left(\frac{\mathrm{d}\mathbb{Q}}{\mathrm{d}\mathbb{P}}\right)}\leq \frac{p}{1-p}\right\}=[1-c(\mathbb{Q}),1].$$
        Note that $c(\mathbb{Q})\leq \frac{1}{2}$ for every $\mathbb{Q}\in\mathcal{P}^\infty_{\mathbb{P}}$. Finally, we obtain that
        \begin{align*}
            \aerm(X) 
            &= \max\left\{\sup_{p\in \left[0,\frac{1}{2}\right]}\left\{ \inf_{\mathbb{Q}\in \mathcal{M}_p}\{E_{\mathbb{Q}}[X]\}-g(p)\right\},\sup_{p\in \left[\frac{1}{2},1\right)}\left\{ \max_{\mathbb{Q}\in \mathcal{M}_p}\{E_{\mathbb{Q}}[X]\}-g(p)\right\}\right\}\\
            &=\max\left\{\sup_{p\in \left[0,\frac{1}{2}\right]}\left\{ \inf_{\substack{\mathbb{Q}\in\mathcal{P}^\infty_{\mathbb{P}}\\ p\in D_1(\mathbb{Q})}}\{E_{\mathbb{Q}}[X]-g(p)\}\right\},\sup_{p\in \left[\frac{1}{2},1\right)}\left\{ \sup_{\substack{\mathbb{Q}\in\mathcal{P}^\infty_{\mathbb{P}}\\ p\in D_2(\mathbb{Q})}}\{E_{\mathbb{Q}}[X]-g(p)\}\right\}\right\}\\
            &=\max\left\{\sup_{p\in \left[0,\frac{1}{2}\right]}\left\{ \inf_{\substack{\mathbb{Q}\in \mathcal{P}^\infty_{\mathbb{P}}\\ p\in[0,c(\mathbb{Q})]}}\{E_{\mathbb{Q}}[X]-g(p)\}\right\},\sup_{\mathbb{Q}\in\mathcal{P}^\infty_{\mathbb{P}}}\left\{E_{\mathbb{Q}}[X]-\inf_{p\in D_2(\mathbb{Q})}g(p)\right\}\right\}\\
            &= \max\left\{\sup_{p\in \left[0,\frac{1}{2}\right]}\left\{ \inf_{\mathbb{Q}\in \mathcal{P}^\infty_{\mathbb{P}}}\{E_{\mathbb{Q}}[X]-g(p)+\delta_{[0,c(\mathbb{Q})]}(p)\}\right\},\sup_{\mathbb{Q}\in\mathcal{P}^\infty_{\mathbb{P}}}\{E_{\mathbb{Q}}[X]-g(1-c(\mathbb{Q}))\}\right\}.
        \end{align*}
    \end{proof}

    \begin{remark}
        The previous proof shows that the infimum in the dual representation can be replaced by a minimum if $p\in\left(0,\frac{1}{2}\right]$.
    \end{remark}

        Typical levels for VaR applied in practice are above
        $90\%$. Empirically findings show that for expectiles we need levels larger than $90\%$, if the expectile should be equal to the VaR for some fixed level above $90\%$, see ~\cite[Figure 2.2]{Phd}. Hence, to avoid that levels below $90\%$ influence the outcome of the AERM we can e.g.~require that $g(p)=0$ for $p\leq \frac{1}{2}$. Then, the dual representation in Proposition~\ref{prop:dual_expectiles} reduces to      
        $$\sup_{\mathbb{Q}\in\mathcal{P}^\infty_{\mathbb{P}}}\{E_{\mathbb{Q}}[X]-g(1-c(\mathbb{Q}))\}.$$ 
       
       Here, the penalty term in the dual representation is  $g(1-c(\mathbb{Q}))$. This penalty term is minimized iff $c(\mathbb{Q})\in (0,1]$ is maximized. The latter becomes maximal for $\mathbb{Q}=\mathbb{P}$. For  $\mathbb{Q}\neq \mathbb{P}$, we have $c(\mathbb{Q})\leq c(\mathbb{P})$ and hence, the penalty term increases. Note that a decrease in $c$ occurs if either the essential infimum of $\frac{\mathrm{d}\mathbb{Q}}{\mathrm{d}\mathbb{P}}$ decreases or the essential supremum of $\frac{\mathrm{d}\mathbb{Q}}{\mathrm{d}\mathbb{P}}$ increases. Such  changes measure the difference between the potential model $\mathbb{Q}$ and the real-world measure $\mathbb{P}$. 
       
       In addition, the proof of Proposition~\ref{prop:dual_expectiles} shows that the AERM-dual representation has an analogous form as the dual representation of expectiles, but the target risk profile leads to an additional penalty term. Even more, the second term in the maximum in the dual representation of the AERM is of the same form as the dual representation of the adjusted ES, see~\cite[Proposition 3.7]{AES}. This is a consequence of the fact that the ES is coherent and that expectiles are coherent for levels larger than $\frac{1}{2}$. As the proof shows, the coherence of the expectiles gives us that the supremum over all levels between $\frac{1}{2}$ and $1$ is attained in the point $1-c(\mathbb{Q})$, given $\mathbb{Q}\in\mathcal{P}^\infty_{\mathbb{P}}$. We illustrate this in the next example.

    \begin{example}\label{ex dual}
        Let $g\in\mathcal{G}_0$ such that $g(p)=0$ for $p\in[0,0.95]$ and $g(p)=\infty$ otherwise. Then, we know for $X\in L^1$ that $\aerm(X)=e_{0.95}(X)$. Now, by the assumption of an atomless probability space, we can choose $X\in L^1$ such that $\mathbb{P}(X=1)=\mathbb{P}(X=0)=\frac{1}{2}$. Then, the supremum of the dual representation is attained for $\mathbb{Q}\in\mathcal{P}^\infty_{\mathbb{P}}$ with $\frac{\mathrm{d}\mathbb{Q}}{\mathrm{d}\mathbb{P}} = 1.9\cdot\mathds{1}_{\{X=1\}}+0.1\cdot\mathds{1}_{\{X=0\}}$. Indeed, in this case, it holds that $1-c(\mathbb{Q})=0.95$. Now, on the one side, if we choose another measure $\bar{\mathbb{Q}}$ with $$\esssup\left(\frac{\mathrm{d}\bar{\mathbb{Q}}}{\mathrm{d}\mathbb{P}}\right)- \essinf\left(\frac{\mathrm{d}\bar{\mathbb{Q}}}{\mathrm{d}\mathbb{P}}\right) >\esssup\left(\frac{\mathrm{d}\mathbb{Q}}{\mathrm{d}\mathbb{P}}\right)- \essinf\left(\frac{\mathrm{d}\mathbb{Q}}{\mathrm{d}\mathbb{P}}\right),$$ then this case is irrelevant, because $g\left(1-c\left(\bar{\mathbb{Q}}\right)\right)=\infty$. On the other side, if we choose a measure  $\bar{\mathbb{Q}}$ such that the inverse inequality holds, we do not maximize the expected value.
    \end{example}

    For the first term in the maximum of the dual representation of the AERM, it is not possible to interchange the supremum and the minimum, as we show in the next example.

    \begin{example}
        We show that there exists $X\in L^1$ such that
        $$\sup_{p\in \left[0,\frac{1}{2}\right]}\left\{ \inf_{\mathbb{Q}\in \mathcal{P}^\infty_{\mathbb{P}}}\{E_{\mathbb{Q}}[X]-g(p)+\delta_{[0,c(\mathbb{Q})]}(p)\}\right\}<\inf_{\mathbb{Q}\in \mathcal{P}^\infty_{\mathbb{P}}}\left\{\sup_{p\in \left[0,\frac{1}{2}\right]} \{E_{\mathbb{Q}}[X]-g(p)+\delta_{[0,c(\mathbb{Q})]}(p)\}\right\}.$$
        To do so, we choose $g\in\mathcal{G}_0$ with $g(p)=1$ for all $p\in\left(0,\frac{1}{2}\right)$ and $g(p)=\infty$ otherwise. As in the previous example, by the assumption of an atomless probability space, we can choose $X\in L^1$ with $\mathbb{P}(X=1)=\mathbb{P}(X=-1)=\frac{1}{2}$. 
        Then, for every $\mathbb{Q}\in \mathcal{P}^\infty_{\mathbb{P}}$ with $c(\mathbb{Q})<\frac{1}{2}$ we obtain that 
        $$\sup_{p\in \left[0,\frac{1}{2}\right]} \{E_{\mathbb{Q}}[X]-g(p)+\delta_{[0,c(\mathbb{Q})]}(p)\}=\infty.$$
        If $c(\mathbb{Q})=\frac{1}{2}$, then it has to hold that $\esssup\left(\frac{\mathrm{d}\mathbb{Q}}{\mathrm{d}\mathbb{P}}\right)=\essinf\left(\frac{\mathrm{d}\mathbb{Q}}{\mathrm{d}\mathbb{P}}\right)$ and so, by $ E\left[\frac{\mathrm{d}\mathbb{Q}}{\mathrm{d}\mathbb{P}}\right]=1$ we have that $\mathbb{Q}=\mathbb{P}$ . Hence, it follows that 
        $$\inf_{\mathbb{Q}\in \mathcal{P}^\infty_{\mathbb{P}}}\left\{\sup_{p\in \left[0,\frac{1}{2}\right]} \{E_{\mathbb{Q}}[X]-g(p)+\delta_{[0,c(\mathbb{Q})]}(p)\}\right\} = E_{\mathbb{P}}[X]=0.$$

        By the ordering of the family of expectiles, see~\cite[Proposition 5]{BELLINI201441}, we obtain that
         \begin{align*}
            \sup_{p\in \left[0,\frac{1}{2}\right]}\left\{ \inf_{\mathbb{Q}\in \mathcal{P}^\infty_{\mathbb{P}}}\{E_{\mathbb{Q}}[X]-g(p)+\delta_{[0,c(\mathbb{Q})]}(p)\}\right\}            &=\sup_{p\in \left[0,\frac{1}{2}\right]}\{ e_p(X)-g(p)\}\\
            &=\max\{e_0(X),e_\frac{1}{2}(X)-1\}\\
            &=-1\\
            &<0=\inf_{\mathbb{Q}\in \mathcal{P}^\infty_{\mathbb{P}}}\left\{\sup_{p\in \left[0,\frac{1}{2}\right]} \{E_{\mathbb{Q}}[X]-g(p)+\delta_{[0,c(\mathbb{Q})]}(p)\}\right\}.
        \end{align*}
    \end{example}

    \begin{remark}
    In the case of the SCRM and CRM we cannot apply convex duality arguments directly, due to the missing convexity of $\VaR$ and $\RVaR$. But, both risk measures are star-shaped. Therefore, we can apply the dual representation for star-shaped monetary risk measures, given in~\cite[Proposition 8]{laeven2023dynamicreturnstarshapedrisk}. To do so, assume an arbitrary extended real-valued function $f$ with domain $\mathcal{Y}$. The effective domain of $f$ is $\dom(f)=\{Y\in\mathcal{Y}\,|\,f(Y)<\infty\}$. Then, for every $Z\in \dom(f)$ we define the function $f^Z$ by $f^{Z}(Y)=\alpha f(Y)$, if there exists an $\alpha\in[0,1]$ with $Y=\alpha Z$ and $f^Z(Y) = \infty$, otherwise. Then, for $\mathcal{X}=L^1$, a family of risk measures $\Rho=\{\rho_p\}_{p\in[0,1]}$ and $g\in\mathcal{G}_0$ such that $\adjustedRiskMeasure$ is star-shaped, we obtain by~\cite[Proposition 8]{laeven2023dynamicreturnstarshapedrisk} that for every $X\in L^1$ it holds that
    \begin{align*}
        \adjustedRiskMeasure(X) 
        = \min_{Z\in L^1}\left\{\adjustedRiskMeasure^{Z}(X)\right\}
        =\min_{Z\in L^1}\left\{\sup_{Y\in L^{\infty}}\left\{E(YX)-\max\{0,E[ZY]-\adjustedRiskMeasure(Z)\}\right\}\right\}.
    \end{align*}
    \end{remark}
    
    \section{Case study}\label{case}
    
    In this section, we test the performance of the new risk measures from Section 3 for the S$\&$P 500 index from January $03$, $2000$ to February $08$, $2024$. In addition, we consider individual stocks out of the S$\&$P 500\footnote{This data was obtained from~\url{https://de.finance.yahoo.com/}.}.
    
    The concrete choice of the target risk profile $g$ is crucial for an adjusted risk measure. We test the following concrete possibilities: In Section~\ref{sec:caseStudy_stepFunctions}, we model $g$ as a step function. In contrast, $g$ can also be obtained as risk measures of a benchmark random variable $Z\in L^1$. For instance, in case of the adjusted ES one can use $g:p\mapsto \ES_{p}(Z)$. In such cases, we say $g$ is a benchmark profile. We use this methodology in Sections~\ref{sub bench} and~\ref{sub rol}. Both variants, step functions and benchmark profiles, are suggested in~\cite{AES}, but they only illustrate these variants in a short real-world example, see~\cite[Figure 2]{AES}. Therefore, we aim for a more comprehensive case study.
    
    We reevaluate the risk measures over time by calculating them for a rolling time window of $60$ days, i.e.,~we compute the empirical estimators of VaR, ES, RVaR and expectiles for a rolling window of $60$ data points and calculate the values of the adjusted risk measures with the help of a numerical procedure. For simplicity, let us introduce the empirical estimators for the first time window, for which we denote the data points by  $x_1,\dots,x_{60}$. As data points, we use the negative daily log-returns of the underlying time series. In other words, the random variable $X$ describes the distribution of the negative daily log-return. This is in line with~\cite[Figure 2]{AES}.
    
    We use the same empirical estimators for VaR, ES and RVaR as in~\cite[Examples 2.4, 2.5, 3.2.3]{Robust}. To introduce them, we denote by $x_{(i)}$ the $i$-th smallest element of the data set $\{x_1, \dots , x_{60}\}$. Then, 	for a level $p\in(0,1)$, we obtain
	\begin{align*}
	    \widehat{\text{VaR}}_p(X) &= x_{(\lfloor 60p\rfloor +1)},\\
            \widehat{\text{ES}}_p(X) &= \frac{1}{60-60p}\left(\left(\sum_{i=\lfloor60p\rfloor+2}^{60}x_{(i)}\right)+x_{(\lfloor 60p\rfloor +1)}(\lfloor 60p\rfloor+1-60p )\right).
	\end{align*}

    For $\alpha_1,\alpha_2\in(0,1)$ with $\alpha_1<\alpha_2$ we use a discrete version of the risk measure, where the interval $[\alpha_1,\alpha_2]$ is divided in $20$ equidistant intervals via the points $$u_0 = \alpha_1<u_1 = \alpha_1+\frac{1}{19}(\alpha_2-\alpha_1)<\dots<u_{18} =\alpha_1+\frac{18}{19}(\alpha_2-\alpha_1) <u_{19} = \alpha_2.$$ 
    
    Then, the empirical estimator of the RVaR is
	\begin{align*}
	    \widehat{\text{RVaR}}_{\alpha_1,\alpha_2}(X) = \frac{1}{20}\sum_{i=0}^{19}\widehat{\text{VaR}}_{u_i}(X).
	\end{align*}

    The empirical estimator of the expectile $e_q(X)$ with $q\in(0,1)$ is calculated as the root of the function
	$$f(q) = qE[\max\{X-e_q(X),0\}]-(1-q)E[-\min\{X-e_q(X),0\}],$$
    where the expectations are estimated from the  empirical CDF obtained from the data set $\{x_1, \dots , x_{60}\}$. 
    
    \subsection{Step functions}\label{sec:caseStudy_stepFunctions}
    
    In \cite{AES}, the authors illustrate the adjusted ES for the following target risk profile:
    \begin{align}\label{sec:step_function_case_study}
        g(p)=0.01\cdot\mathds{1}_{(0.95,0.99]}(p)+\infty\cdot\mathds{1}_{(0.99,1]}(p).
    \end{align}

    Then, for $X \in L^1$ it holds that
	$$\text{ES}^g(X) = \max\{\text{ES}_{0.95}(X),\text{ES}_{0.99}(X)-0.01\}.$$
	
    We compare this with the SCRM, the FCRM and the AERM, in which we choose the probability levels such that we obtain the following representations for $X\in L^1$ (cf.~Example~\ref{example}):
	\begin{enumerate}[(i)]
	    \item $\scrm(X) = \max\{\text{VaR}_{0.95}(X),\text{ES}_{0.99}(X)-0.01\},$
            \item $\fcrm(X) = \max\{\text{RVaR}_{0.95,0.99}(X),\text{ES}_{0.99}(X)-0.01\},$
            \item $\aerm(X) = \max\{e_{0.95}(X),e_{0.99}(X)-0.01\}.$
	\end{enumerate}

    To compare one of these new adjusted risk measures $\adjustedRiskMeasure$ with the known adjusted ES, we also compare the relative difference between them, i.e.,~we calculate the value
	$$\frac{\adjustedRiskMeasure(X)-\text{ES}^g(X)}{\text{ES}^g(X)}.$$

    \begin{figure}[h]
		\begin{center}
			\includegraphics[scale=0.7]{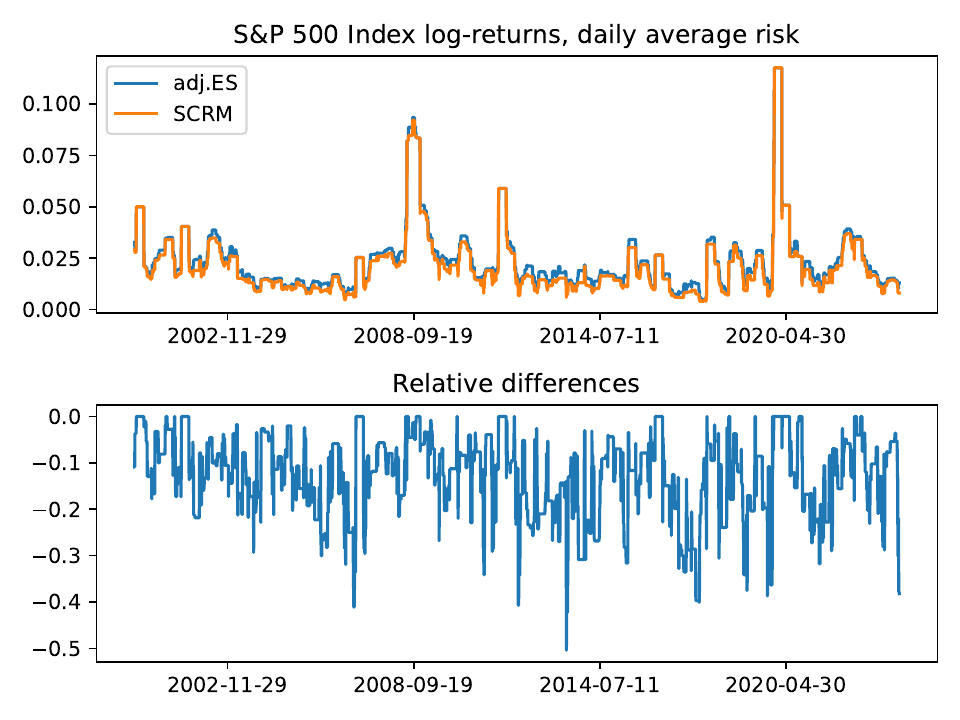}
		\end{center}
		\caption{\footnotesize{\textit{First plot:} SCRM and adjusted ES (adj.~ES). \textit{Second plot:} Relative difference between SCRM and adjusted ES.}}
		\label{1}
	\end{figure}

    For brevity, in this subsection we only compare the SCRM and the adjusted ES in Figure~\ref{1}. The SCRM is always smaller or equal than the adjusted ES, because both risk measures can either coincide in the value $\text{ES}_{0.99}(X)-0.01$ or attain $\text{ES}_{0.95}(X)$, in case of the adjusted ES, and $\text{VaR}_{0.95}(X)$, in case of the SCRM. In the latter case, it holds that $\text{VaR}_{0.95}(X)\leq \text{ES}_{0.95}(X)$. Furthermore, the relative difference between SCRM and adjusted ES is on average $-12.6\%$ and the corresponding median is $-11.3\%$. The huge relative differences between $-60\%$ and $-40\%$ arise at times in which the risk measures are small. In these cases, we have two possibilities for the values of the  adjusted ES and the SCRM. First, the SCRM is given by $\text{VaR}_{0.95}(X)$ and the adjusted ES is given by $\text{ES}_{0.95}(X)$. Second the SCRM is given by $\text{ES}_{0.99}(X)-0.01$ and the adjusted ES is given by $\text{ES}_{0.95}(X)$. In the first case, the tail blindness of the VaR is the reason why the SCRM reacts slower than the adjusted ES. To be precise, if the time interval for the calculation of the SCRM is shifted, even if a new data point is large, then the robustness of the VaR estimator leads only to small changes in the SCRM. In contrast, for the second case, when the maximum for the SCRM is attained in $\text{ES}_{0.99}(X)-0.01$, then the missing robustness of the ES estimator makes the SCRM more sensitive to a shift of the underlying time interval.
    
    Now we turn to the analysis of times in which the risk measures attain large values. To do so, on the left-hand side in Figure~\ref{2} we illustrate the financial crisis in $2008$. Furthermore, on the right-hand side in Figure~\ref{2} we test a different jump size of $0.03$ for the target risk profile than the one used in the step function from Equation~(\ref{sec:step_function_case_study}). Comparing these two cases for the SCRM, Figure~\ref{2} allows us to conclude that most of the time during the crisis the SCRM is given by the ES at level $0.99$ minus the jump size, i.e.,~either $0.01$ or $0.03$. However, the largest SCRM value around September 18 is equal in both cases. This means that the SCRM is equal to $\text{VaR}_{0.95}(X)$. We conclude that in times of crisis there is no clear tendency for which term the maximum in the calculation of the SCRM is attained. 
    
    In addition, the difference between SCRM and adjusted ES is much more pronounced in the case of the larger jump size. This means that the adjusted ES leads to a more conservative risk assessment than the SCRM.
	
    \begin{figure}[h]
		\begin{center}
			\includegraphics[scale=0.77]{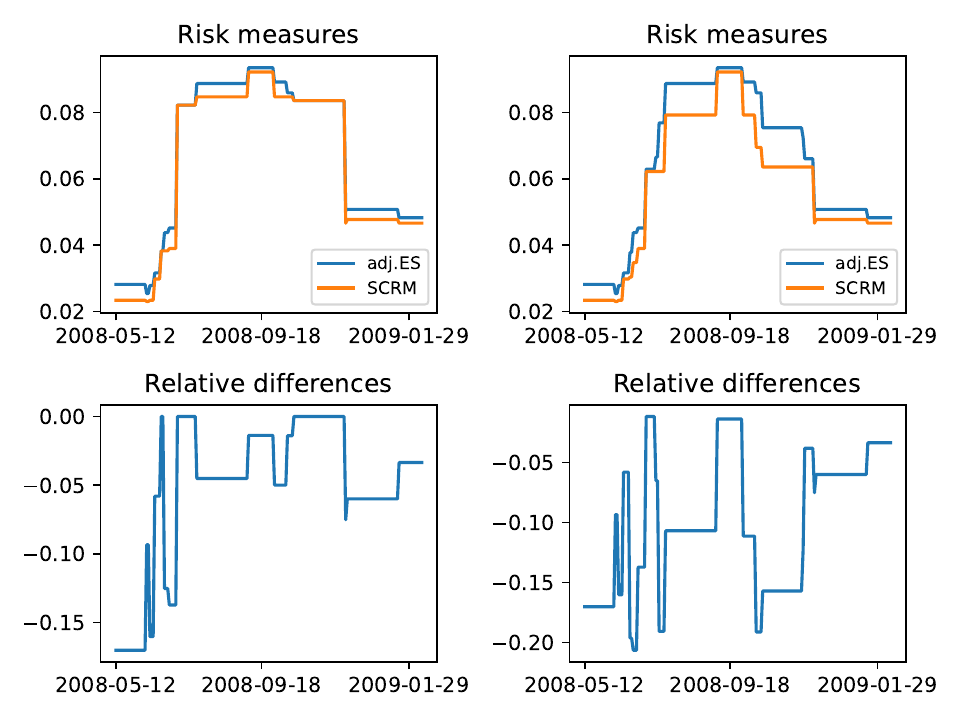}
		\end{center}
		\caption{\footnotesize{SCRM and adjusted ES (adj.~ES) in times of the financial crisis 2008 (upper row). On the left-hand side, we use the step function from Equation~(\ref{sec:step_function_case_study}). On the right-hand side we changed the jump size of this step function to $0.03$. The bottom line shows the  relative differences between SCRM and adjusted ES. }}
		\label{2}
	\end{figure}

    \subsection{Benchmark profiles}\label{sub bench}
	Now we calculate adjusted risk measures for two different stocks out of the S$\&$P $500$. These stocks are from different market segments, namely Microsoft (industry: technology) and Boeing (industry: aerospace). Key statistics of these stocks are given in Table~\ref{table}.

    \begin{table}[H]
    $$\begin{tabular}{|c||c|c|c|c|c|c|c|c|c|}
        \hline
        Asset& mean & median & std.~dev.& 25\%-qnt.& 75\%-qnt. & corr. M& corr. B& corr. SPX\\
        \hline
        \hline
        M & -0.040\% & -0.036\% & 1.9\% & -0.94\% & 0.82\% & 1 & 0.38&0.71 \\
        \hline
        B & -0.034\% & -0.061\% & 2.2\% & -1.10\% & 0.98\% & 0.38 & 1&0.61 \\
        \hline
        SPX & -0.020\% & -0.059\% & 1.2\% & -0.59\% & 0.49\% & 0.71 &0.61& 1 \\
        \hline
    \end{tabular}$$
    \caption{\footnotesize{Statistical quantities of negative daily log-returns of Microsoft (M), Boeing (B) and the S$\&$P $500$ index (SPX) between January $03$, $2000$ and February $08$, $2024$. Here, ``std.~dev.'', ''qnt.''~and ``corr.''~stands for standard deviation, quantile and correlation.}}
    \label{table}
\end{table}

    The target risk profiles are given as benchmark profiles in which we use the S$\&$P $500$ as benchmark random variable. For instance, in case of the AERM, $g$ at level $p$ is given by the expectile of the S$\&$P $500$ at level $p$. The benchmark  profile of the SCRM is based on the VaR up to a level of $0.99$. Further, instead of the FCRM, we use now the CRM, because the FCRM would reduce to a finite number of RVaRs, compare also the Definitions~\ref{CRM} and~\ref{FCRM}. In this case, VaRs are replaced by RVaRs with the following levels: $p_0=0.0001,\,p_1=0.165, \,p_2=0.33,\, p_3=0.495,\, p_4=0.66,\, p_5=0.825,\,p_6=0.99.$
    
    We already saw that the SCRM and the adjusted ES attain large values in times of crisis in which the market is usually more volatile. Motivated by this fact, we use different time periods to obtain empirical distributions representing different benchmark random variables. These time periods are chosen out of the first ten years of our underlying data:
    \begin{enumerate}[(1)]
        \item \textit{Low volatility frame:} December 26, 2003 to December 21, 2006,
        \item \textit{Medium volatility frame:} January 03, 2000 to January 03, 2002,
        \item \textit{High volatility frame:} May 15, 2007 to May 11, 2009.
    \end{enumerate}
     
    To ensure that the benchmark profile $g$ is an element of the set $\mathcal{G}_0$, we stipulate $g(p) =0$ if the estimator of the monetary risk measure is less than zero, i.e.,~$\widehat{\rho_p}(X)<0$. For brevity, we only present the benchmark profiles of the adjusted ES and the AERM for the low and the high volatility frames (Figure~\ref{Lowvolag}), which can be found in Appendix~\ref{sec:figures}. The estimators for the adjusted risk measures are then obtained by discretizing the interval of all levels $[0,1]$ by steps of size $0.02$ and calculating the maximum with respect to this grid, i.e.~the maximum of the underlying monetary risk measure minus the target risk profile. Regarding the discussion in Section~\ref{sub}, i.e.,~to guarantee finiteness and continuity of the adjusted ES and AERM, the calculation is restricted to levels between $0.01\%$ and $99.99\%$. For the composed risk measures, we set the underlying family of risk measures to be equal to the ES for levels above $r=p_n=99\%$, corresponding to Definition~\ref{def SCRM} and Definition~\ref{CRM}.
        
    Now, we are in the position to calculate the adjusted risk measures. We evaluate them in the time period from January 04, 2010 to February 08, 2024, see Figure~\ref{10}. The maximum values in Figure~\ref{10} are larger in case of Boeing. The maximum values are observed in the Corona crisis in January 2020. Here, we see that the SCRM for Boeing is twice the SCRM for Microsoft. This can be explained as follows: The larger standard deviation of Boeing in Table \ref{table} means that the log-returns spread more widely than in case of Microsoft, which explains the larger peaks in case of Boeing. 

    \begin{figure}[!]
		\begin{center}
			\includegraphics[scale=0.83]{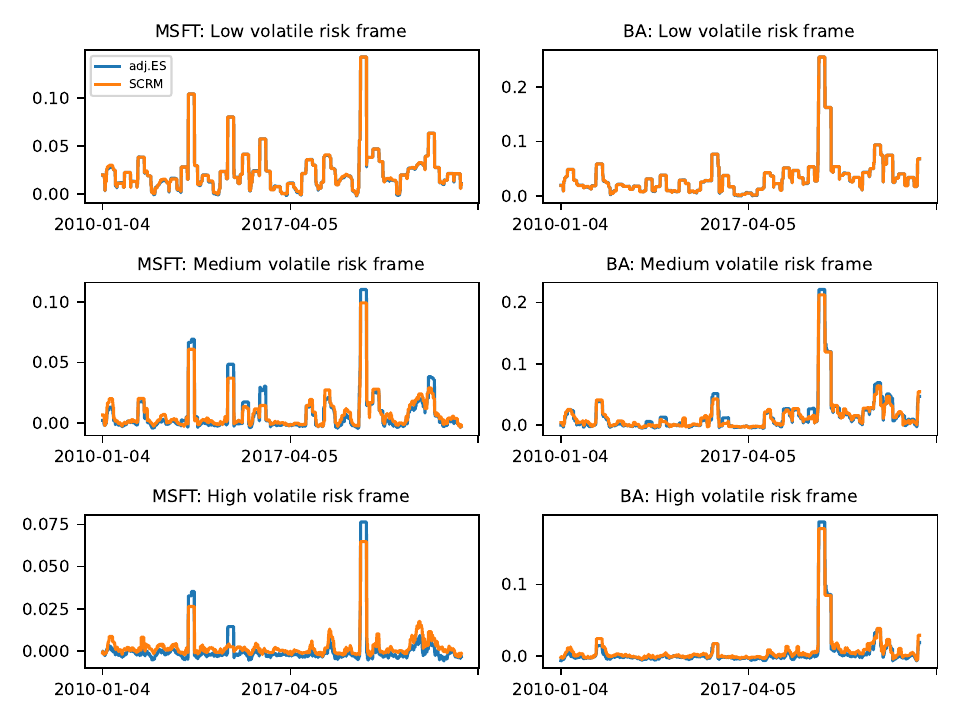}
			\includegraphics[scale=0.83]{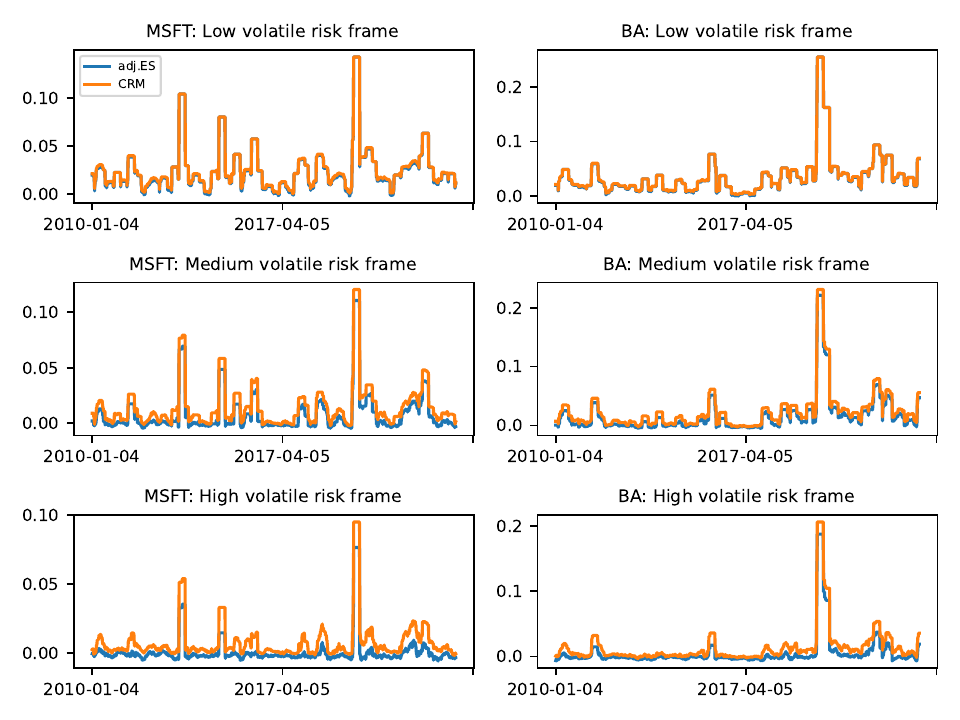}
		\end{center}
		\caption{\footnotesize{SCRM, CRM and adjusted ES (adj.~ES) for Microsoft (MSFT) and Boeing (BA) and for three different benchmark profiles based on the S$\&$P $500$.}}
		\label{10}
    \end{figure}
    
    Furthermore, for both stocks, the risk measures are decreasing from the low volatility frame towards the high volatility frame. The reason is that the choice of the volatility time frame significantly influences the magnitude of the benchmark profile. From Figure~\ref{Lowvolag} we see that a more volatile time frame leads to larger values of the benchmark profile. For instance, the benchmark profile calibrated for the low volatility frame affects the adjusted risk measure in the sense that it reacts more sensitive to small changes in the stock price movements. This finding gives rise to different use cases for adjusted risk measures depending on the benchmark profile: Small benchmark profiles (low volatility frame) are useful for a conservative calculation of required key figures to measure the risk. In contrast, for large benchmark profiles (high volatility frame), the adjusted risk measures can be used to detect times of crises in the underlying data, because they only detect peaks and ignore usual market fluctuations.   
    
    In addition, focusing on the low volatility frame, the SCRM and the CRM are most of the time equal to the adjusted ES. Hence, the usage of the SCRM instead of the adjusted ES is only significant if the target risk profiles are based on a time period in which the market is more volatile.

    If we look at the high volatility frame, then another observation is that the SCRM and the CRM behave differently. To be precise, the SCRM is larger than the adjusted ES, for small values, and lower than the adjusted ES, for large values. In contrast, the CRM is always larger than the adjusted ES. The larger value of the CRM means that the difference between the RVaR of a stock and the RVaR of the index is larger than the difference of the corresponding VaR values. This leads us to the conclusion that the underlying family of risk measures should be chosen according to the personal risk aversion. For instance, more risk averse investors should use the CRM to obtain larger values for the corresponding adjusted risk measure.  
    
    For completeness, we mention that similar conclusions can be drawn for the AERM, see Figure~\ref{Stockbenchmarkaerm1} in Appendix~\ref{sec:figures}.
    
    \subsection{Benchmark profiles reevaluated over time}\label{sub rol}

    So far, we always prefix the target risk profile, i.e.,~it does not change over time. This is for instance the case if a regulator has to impose a target risk profile for legal regulation. From now on, we test the impact of daily updating the target risk profile due to the latest observations. Indeed, it is common practice to recalibrate model parameters for portfolio and risk management on a daily basis.
    
	For our purpose, it is necessary to fix the length of the time interval that is used to obtain the target risk profile. We use two concrete choices, namely the past $60$ and the past $200$ days.
    As before, we calculate the adjusted risk measures for the two stocks. The benchmark profiles are calibrated with respect to the S$\&$P 500. 

	Figure~\ref{11} shows the SCRM, the CRM and the AERM compared to the adjusted ES for the Microsoft stock. On the left-hand side in Figure~\ref{11}, the rolling time window of $60$ days and on the right-hand side, the  rolling time window of $200$ days is used. For comparison, we also plot the adjusted risk measures for prefixed benchmark profiles based on a low volatility frame, recall the results from Section~\ref{sub bench}.

    \begin{figure}[h]
		\begin{center}
			\includegraphics[scale=0.85]{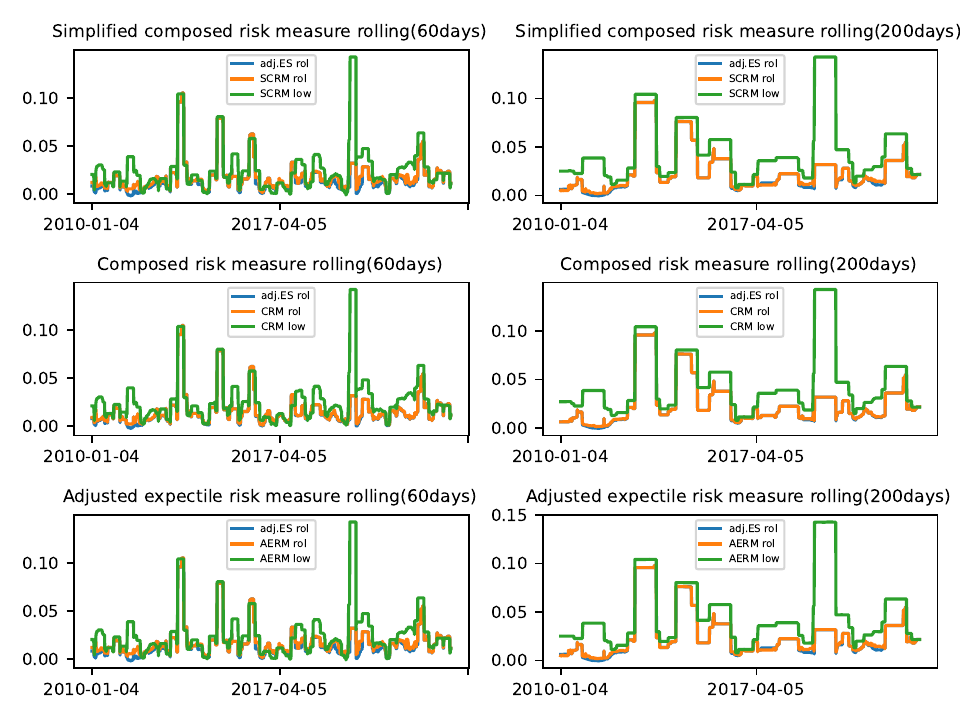}
		\end{center}
        \caption{\footnotesize{Reevaluated adjusted risk measures (``rol'') for Microsoft plotted against the benchmark adjusted risk measure based on the low volatility  frame (``low'') from Section~\ref{sub bench}.}}
		\label{11}
	\end{figure}    

     The adjusted ES is at many times equal to the SCRM, the CRM and the AERM, i.e.,~the graphs are overlapping. 
     Furthermore, the adjusted risk measures with reevaluated benchmark profiles admit smaller values than the adjusted risk measures for the low volatility frame. The aforementioned effect is amplified during the peak of the Corona crisis. For instance, in the case of $200$ days, the adjusted risk measures based on the low volatility frame are roughly four to five times larger than the reevaluated counterparts with a peak above~$0.14$. This means that the reevaluated adjusted risk measures are less sensitive towards strong fluctuations of the underlying time series data. In addition, this effect is slightly mitigated in case of using $60$ days instead of $200$ days: during the Corona crisis, the adjusted risk measures for the $60$ days time window are around $0.04$, which is larger than $0.03$ in case of the $200$ days time window. Hence, the risk measures based on $200$ days are slightly less sensitive to detect a crisis in the underlying time series data.

    To understand the different magnitudes during the Corona crisis, we look at the level that solves the underlying maximization problem of an adjusted risk measure in Figure~\ref{coronap}. We see that the optimal level of the SCRM for the low volatility frame is at many times  larger than the ones for the reevaluated versions. This stems from the fact that in case of the low volatility frame, the benchmark profile admits small values. So, the difference between risk measure and benchmark profile is especially large for levels close to $1$. Note, for a level equal to one, the ES is the largest value (i.e.~the essential supremum) of the last $60$ data points.

    \begin{figure}[h]
		\begin{center}
			\includegraphics[scale=0.75]{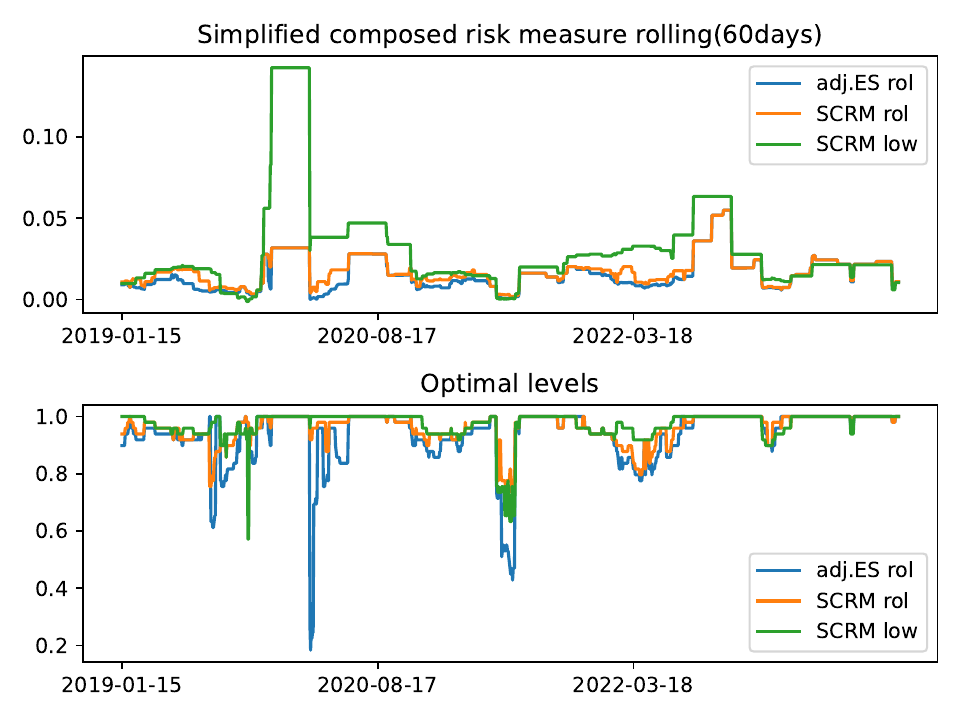}
		\end{center}
        \caption{\footnotesize{{\textit{First plot:} Reevaluated adjusted risk measures based on the last $60$ days (``rol'') for Microsoft plotted against the adjusted risk measure for the low volatility  frame (``low'') from Section~\ref{sub bench}. \textit{Second plot:} Optimal levels of the adjusted risk measures.}}}
		\label{coronap}
	\end{figure} 
    
    As a final part, we test if adjusted risk measures are useful to detect outliers between different financial markets. To do so, we consider daily log-returns of two different stock indices, namely the Eurostoxx $50$ (for the European market)\footnote{This data was also obtained from~\url{https://de.finance.yahoo.com/}} and the S$\&$P $500$. We use the Eurostoxx $50$ as input for the adjusted risk measure with reevaluated benchmark profile based on the S$\&$P $500$. Recall that the adjusted risk measures are based on negative daily log-returns. We focus on data between January 15, 2019 and February 08, 2024.

    \begin{figure}[h]
		\begin{center}
			\includegraphics[scale=0.85]{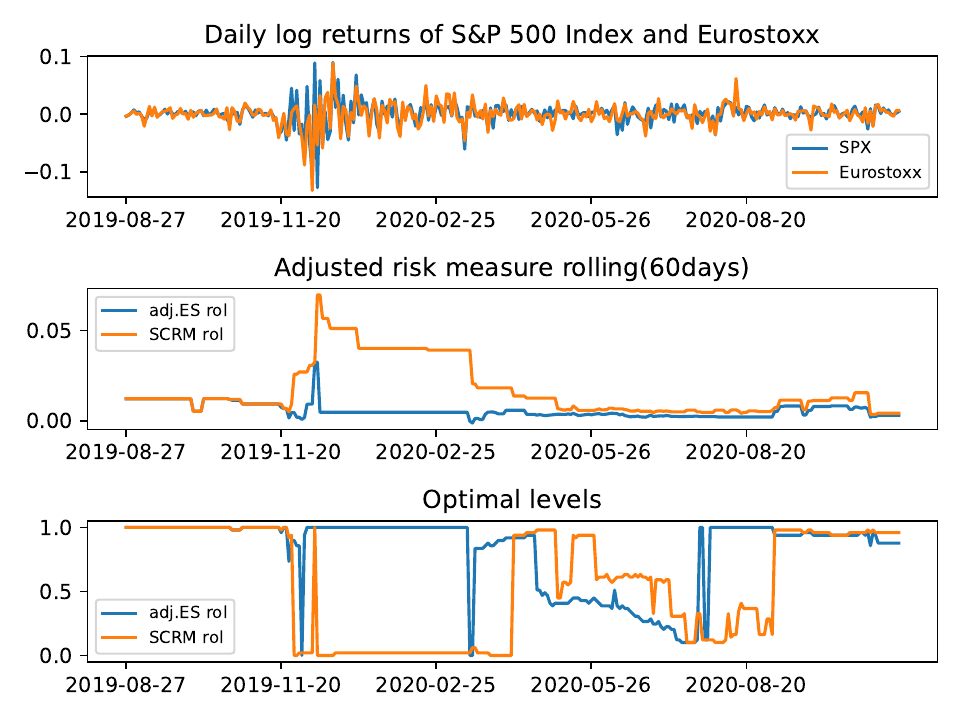}
		\end{center}
        \caption{\footnotesize{{\textit{First plot:} Daily log-returns of the S$\&$P $500$ and the Eurostoxx $50$. \textit{Second plot:} Adjusted risk measures of the Eurostoxx $50$ index with a reevaluated benchmark profile based on the past $60$ days of the S$\&$P $500$. \textit{Third plot:} Optimal levels of the adjusted risk measures of the Eurostoxx $50$ index with a reevaluated benchmark profile based on the past $60$ days of the S$\&$P $500$.}}}
		\label{Goldp}
	\end{figure}
    
    The adjusted ES and the SCRM are illustrated in the second plot in Figure~\ref{Goldp}. They admit a peak during the Corona crisis. This means that the two indices are maximally different at this time. The different magnitudes of the peaks for the SCRM and the adjusted ES rely on the fact that the SCRM detect either the difference of the largest daily log-returns, i.e.,~the difference in essential infima (for level $p=0$), or the difference of the smallest daily log-returns, i.e.,~the difference in essential suprema (for level $p=1$). In contrast, the adjusted ES only takes the largest daily log-returns via the expectation into account (for level $p=0$). Hence, the adjusted ES is mostly dominated by large negative outliers in the daily log-returns, which is in particular the case during the Corona crisis. For completeness, the optimal levels of the adjusted ES and the SCRM are illustrated in the third plot in Figure~\ref{Goldp}. Note, before the Corona crisis both risk measures have an optimal level of $p=1$, i.e.,~both risk measures detect an significant difference in the smallest daily log-returns between both indices. During the Corona crisis, the SCRM detects also large differences between the largest daily log-returns, i.e.,~the optimal level of the SCRM is given by $p=0$. So, the SCRM in combination with a benchmark profile reacts towards positive and negative outliers, while the adjusted ES only reacts against outliers given by small log-returns.

   Concluding, adjusted risk measures based on reevaluated benchmark profiles are significantly smaller than adjusted risk measures based on preset benchmark profiles in times of crisis. Further, the reevaluated versions are useful to detect differences between financial markets. Here, the SCRM detects positive and negative outliers. In contrast, the adjusted ES only focuses on negative outliers in the daily log-returns.

    \subsection{Summary of key findings}
    In Section~\ref{sub bench}, we saw that adjusted risk measures with benchmark profiles calibrated on a high volatility frame are useful to visualize times of crisis in the underlying data. In contrast, using a step function as target risk profile (Section~\ref{sec:caseStudy_stepFunctions}) or a benchmark profile calibrated on a low volatility frame (Section~\ref{sub bench}) is more sensitive to common market fluctuations. So, both options seem to be suitable for risk averse investors. In contrast, if we reevaluate the benchmark profiles over time (Section~\ref{sub rol}), then the adjusted risk measures underestimate times of crisis. So, they should only be considered by less risk averse investors. In addition, adjusted risk measures based on reevaluated benchmark profiles are useful to compare financial market indices. Here, the choice of the family of risk measures defines if positive and negative outliers are relevant (SCRM) or if the focus lies on negative outliers (adjusted ES).
    
    \section*{Data availability}

    How we obtained the time series data is explained in Section~\ref{case}.

\bibliographystyle{amsalpha}
\bibliography{main}

\FloatBarrier

\appendix
\section{Further explanations to Table~\ref{table1}}\label{sec:explanationsTable}
Regarding the risk measures in Table~\ref{table1}, we give explanations and counterexamples for convexity, positive homogeneity, subadditivity, consistency with SSD and surplus invariance.

\textit{Convexity:} SCRM and CRM are not convex. To prove this, we create a counterexample. Let $\varepsilon=\frac{1}{100}$ and $c>0$. Then set $g(p)=(c+\varepsilon)\mathds{1}_{\left[\frac{2}{3},1\right]}(p).$ Due to the assumption of an atomless probability space, there exist i.i.d.~random variables $X,Y \in L^1$ with $\mathbb{P}(X=c)=\mathbb{P}(X=0)=\mathbb{P}(X=-\frac{1}{2}c)=\frac{1}{3}$, see~\cite[Proposition A.31]{follmer_stochastic_2016}. Then, it holds that $\mathbb{P}\left(\frac{1}{2}X+\frac{1}{2}Y=c\right)=\mathbb{P}\left(\frac{1}{2}X+\frac{1}{2}Y=0\right)=\mathbb{P}\left(\frac{1}{2}X+\frac{1}{2}Y=-\frac{1}{2}c\right)=\frac{1}{9}$ and $\mathbb{P}\left(\frac{1}{2}X+\frac{1}{2}Y=c\right)=\mathbb{P}\left(\frac{1}{2}X+\frac{1}{2}Y=\frac{1}{4}c\right)=\mathbb{P}\left(\frac{1}{2}X+\frac{1}{2}Y=-\frac{1}{4}c\right)=\frac{2}{9}$. We obtain that
	\begin{equation*} 
		\scrm\left(\frac{1}{2}X+\frac{1}{2}Y\right)\geq 
		\begin{cases} 
			\text{ES}_{\frac{4}{9}+\varepsilon}\left(\frac{1}{2}X+\frac{1}{2}Y\right)-g\left(\frac{4}{9}+\varepsilon\right) & \text{if}\ \frac{4}{9}+\varepsilon>r, \\
			\text{VaR}_{\frac{4}{9}+\varepsilon}(\frac{1}{2}X+\frac{1}{2}Y)-g\left(\frac{4}{9}+\varepsilon\right) & \text{if}\ \frac{4}{9}+\varepsilon\leq r. \\
		\end{cases} 
	\end{equation*}
    Then, $g\left(\frac{4}{9}+\varepsilon\right) = 0$ gives us $\scrm\left(\frac{1}{2}X+\frac{1}{2}Y\right)>0$. Hence, for $r>\frac{2}{3}$ we get
	$\scrm\left(\frac{1}{2}(X+Y)\right)>0 =\VaR_{\frac{2}{3}}(X) = \scrm(X)=\frac{1}{2}\scrm(X)+\frac{1}{2}\scrm(Y).$
	So, the SCRM is not convex in general.
 
	   For the counterexample of the CRM, the underlying levels $\mathcal{L} = \{p_k\}_{k\in\{0,1,\dots n,,n+1\}}$ of the CRM are chosen such that $p_{n-1}=\frac{4}{9}$ and $p_{n}=\frac{2}{3}$. Then, we get that
	$\crm\left(\frac{1}{2}X+\frac{1}{2}Y\right)\geq \text{RVaR}_{\frac{4}{9},\frac{2}{3}}\left(\frac{1}{2}X+\frac{1}{2}Y\right)-g\left(\frac{4}{9}\right)>0$
	and
	$\frac{1}{2}\crm(X)+\frac{1}{2}\crm(Y) = \crm(X)=\VaR_{\frac{2}{3}}(X) =0.$
	So, the CRM is not convex in general.
    In a similar manner, we can create a counterexample for the convexity of the AERM, using the concavity of the expectiles for a level below $\frac{1}{2}$.
	
	\textit{Positive homogeneity:} By Theorem \ref{theo} we know that SCRM and CRM are  positive homogeneous if and only if $g(p)\in(0,\infty)$ for at most one $p\in(0,1]$. For the AERM we construct a counterexample with a step function $g$. To do so, let $\varepsilon=\frac{1}{100}$ and set $g(p)= (2-\varepsilon)\mathds{1}_{\left(0,\frac{2}{3}\right]}(p)+\infty\cdot\mathds{1}_{\left(\frac{2}{3},1\right]}(p).$
	By the assumption of an atomless probability space, there exist a random variable $X\in L^1$ with $\mathbb{P}(X=4) = \mathbb{P}(X=-2)=\frac{1}{2}$. Then, Example~\ref{example} gives us that 
	$\aerm(X)=\max\left\{e_0(X),e_{\frac{2}{3}}(X)-g\left(\frac{2}{3}\right)\right\} = \max\{-2,2-(2-\varepsilon)\}=\varepsilon$
	and hence,
	$2\aerm(X) = 2 \varepsilon< 2+\varepsilon=4-(2-\varepsilon)=e_{\frac{2}{3}}(2 X)-g\left(\frac{2}{3}\right)=\aerm(2X).$
	So, the AERM is not positive homogeneous in general.
	
	\textit{Subadditivity:} Now, we show that the SCRM and the CRM are not subadditive. Let $c>0$, $\varepsilon=\frac{1}{100}$, $q \in(\varepsilon,1)$,  and set  
    $g(p) = \frac{3}{2}c\mathds{1}_{[q-\varepsilon,1]}(p).$	Further, let $X\in L^1$ be a random variable such that $\mathbb{P}(X=0)=q-\varepsilon$ and $\mathbb{P}(X=c)=1 -(q-\varepsilon)$. For the SCRM we obtain that  
	\begin{equation*} 
		\scrm(X+X)\geq 
		\begin{cases} 
			\text{ES}_{q}(2X)-g(q) & \text{if}\ q>r, \\
			\text{VaR}_{q}(2X)-g(q) & \text{if}\ q\leq r. \\
		\end{cases} 
	\end{equation*}
	Hence, by $c<g(q-\varepsilon)\leq g(q)<2c$ we conclude for $q\leq r$ that
	$\scrm(X+X)>0 = 2\cdot 0 =2\scrm(X).$
	Thus, the SCRM is not subadditive in general. If we now set $p_{n-1} = q-\varepsilon$, then with a similar $g$ and $X$ as before, we get that 
	$\crm(X+X)\geq\text{RVaR}_{q,p_{n-1}}(2X)-g(q)>0= 2\cdot 0 =2\crm(X).$
	So, also the CRM is not subadditive in general.
 
	The AERM is also not subadditive. Indeed, from the aforementioned argumentation for the positive homogeneity, we see that 
	$2\aerm(X) < \aerm(2X),$
	which contradicts the subadditivity.

	\textit{Consistency with SSD:} We know that the VaR and RVaR are not consistent with SSD. For the case of RVaR see~\cite[Remark 9]{RVaR2} or~\cite{RVaR3}.
    Since VaR, respectively RVaR, are special cases of SCRM, respectively CRM, we obtain that SCRM and CRM are not consistent with SSD in general. The AERM is consistent with SSD iff the target risk profile is zero for all level smaller than $\frac{1}{2}$, because it is  convex and law-invariant~\cite[Proposition $3.2$]{Mao}. 
	
	\textit{Surplus invariance:} SCRM and CRM are not surplus invariant. For brevity, we only look at the SCRM. Let $q,s\in\left(0,\frac{1}{2}\right)$ with $s<q$. Then, let $X\in L^1$ such that $\mathbb{P}(X=-c)=1-\mathbb{P}(X=c)=q$. Further, set $g(p) = \infty\cdot \mathds{1}_{\left(\frac{q+s}{2},1\right]}(p)$. Then, we obtain for $r<\frac{q+s}{2}$ that
    $\scrm(X)=\text{ES}_{\frac{q+s}{2}}(X)=c(1-q)-c \left(q-\frac{q+s}{2}\right)<c(1-q) = \scrm(\max\{X,0\}).$
 
	The AERM is also not surplus invariant, because the entire tail is used to compute the its value. For a counterexample, let   
    $g(p) = \infty\cdot\mathds{1}_{\left(\frac{1}{2},1\right]}(p)$ and $X\in L^1$ with  $\mathbb{P}(X=-2c)=\mathbb{P}(X=2c)=\frac{1}{2}$ for $c>0$. Then, Example~\ref{example} gives us that
	$\aerm(X)=e_{\frac{1}{2}}(X)=0<c=e_{\frac{1}{2}}(\max\{X,0 \})\leq\aerm(\max\{X,0\}).$
	This shows that the AERM is not surplus invariant in general.

\section{Figures}\label{sec:figures}
Here we present additional figures for Section~\ref{sub bench}. In doing so, Figure~\ref{Lowvolag} illustrates the benchmark profiles of the adjusted ES as well as the AERM for low and high volatility frames. Figure~\ref{Stockbenchmarkaerm1} shows the AERM for different benchmark profiles, analogously to Figure~\ref{10}.
	\begin{figure}[h]
		\begin{center}
			\includegraphics[scale=0.66]{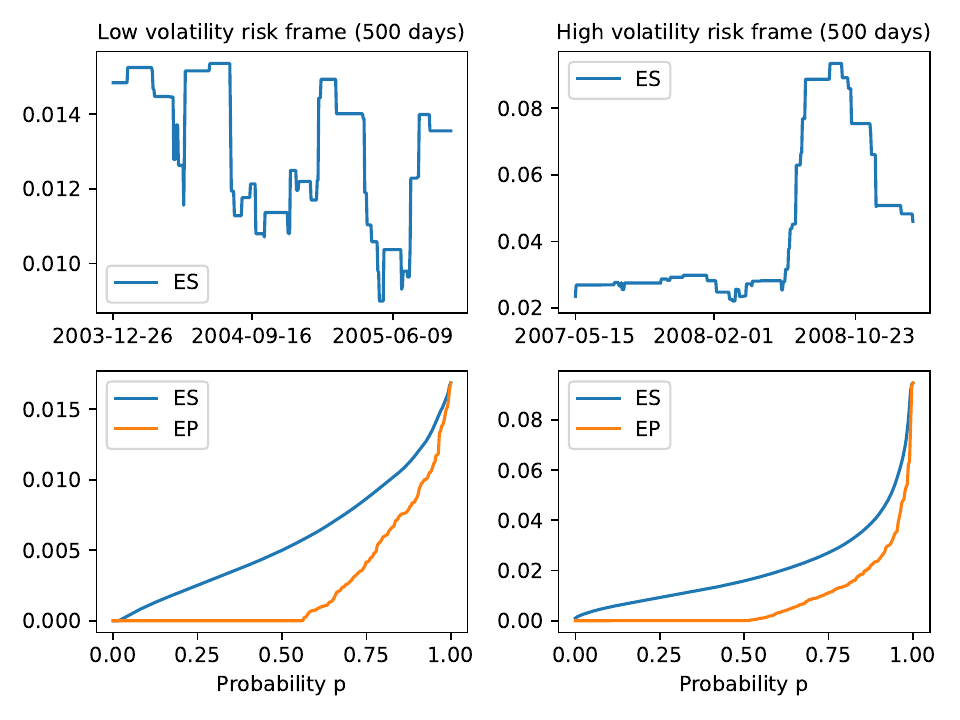}
		\end{center}
		\caption{\footnotesize{{\textit{Upper plots:} $95\%$ ES calculated for a 60-day rolling window for the S$\&$P 500 in a low or high volatility time frame. \textit{Lower plots:} Target risk profiles based on ES and expectiles based on the low or high volatility time frame (500 days).}}}
		\label{Lowvolag}
	\end{figure}

	\begin{figure}[p]
		\begin{center}
			\includegraphics[scale=0.75]{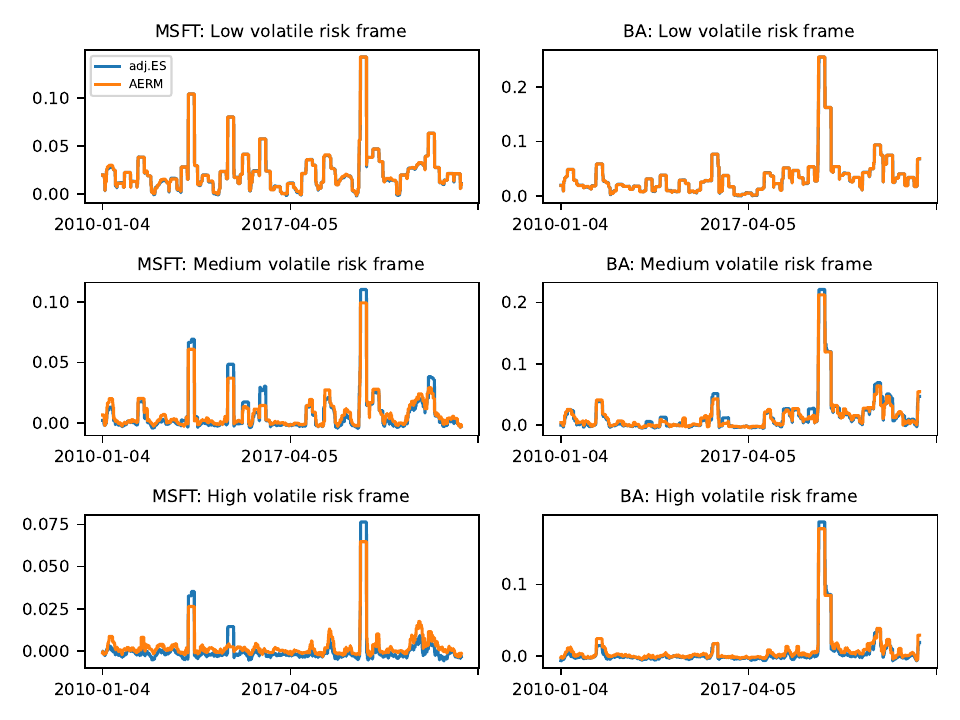}
		\end{center}
        \caption{\footnotesize{{AERM and adjusted ES (adj.~ES) plotted for  Microsoft (MSFT) and Boeing (BA) for three different benchmark profiles, which are based on the S$\&$P $500$.}}}
		\label{Stockbenchmarkaerm1}
	\end{figure}
\end{document}